\renewcommand{\ALG@name}{Procedure}
\theoremstyle{plain}
\newtheorem{theorem}{Theorem}
\newtheorem{lemma}{Lemma}
\theoremstyle{definition}
\newcommand{\vertiii}[1]{{\left\vert\kern-0.25ex\left\vert\kern-0.25ex\left\vert #1 
    \right\vert\kern-0.25ex\right\vert\kern-0.25ex\right\vert}}
\definecolor{DarkBlue}{rgb}{0,.08,.45}
\newcommand{\indep}{\rotatebox[origin=c]{90}{$\models$}}
\newcommand{\Sigmab}{\boldsymbol{\Sigma}}
\newcommand{\Omegab}{\boldsymbol{\Omega}}
\newcommand{\Lb}{\boldsymbol{L}}
\newcommand{\thetab}{\boldsymbol{\theta}}
\newcommand{\vb}{\boldsymbol{v}}
\newcommand{\Vb}{\boldsymbol{V}}
\newcommand{\Mb}{\boldsymbol{M}}
\newcommand{\Db}{\boldsymbol{D}}
\newcommand{\Xb}{\boldsymbol{X}}
\newcommand{\Zb}{\boldsymbol{Z}}
\newcommand{\Ab}{\boldsymbol{A}}
\newcommand{\eb}{\boldsymbol{e}}
\newcommand{\Ib}{\boldsymbol{I}}
\newcommand{\tT}{\text{T}}
\newcommand{\Lomega}{\boldsymbol{L}_{\boldsymbol{\Omega}}}
\newcommand{\Lsigma}{\Lb_{\Sigmab}}
\newcommand{\new}{\textcolor{black}}
\newcommand{\Cis}{C_i \text{'s} > 1}
\newcommand{\Sigmaobs}{\Sigmab_{obs}}
\newcommand{\lnn}{\text{ln}}
\newcommand{\newnew}{\textcolor{black}}
\newcommand{\newup}{\textcolor{black}}
\title{Learning Gaussian graphical models with latent confounders}
\author{Ke Wang\footnote{kewang01@ucsb.edu} \quad Alexander Franks \quad Sang-Yun Oh \\
\\
\small{Department of Statistics and Applied Probability, University of California Santa Barbara}}
\date{}
\begin{document}

\maketitle

\begin{abstract}
Gaussian Graphical models (GGM) are widely used to estimate network structure in domains ranging from biology to finance. In practice, data is often corrupted by latent confounders which biases inference of the underlying true graphical structure.  In this paper, we compare and contrast two strategies for inference in graphical models with latent confounders: Gaussian graphical models with latent variables (LVGGM) and PCA-based removal of confounding (PCA+GGM). While these two approaches have similar goals, they are motivated by different assumptions about confounding.  In this paper, we explore the connection between these two approaches and propose a new method, which combines the strengths of these two approaches. We prove the consistency and convergence rate for the PCA-based method and use these results to provide guidance about when to use each method. We demonstrate the effectiveness of our methodology using both simulations and two real-world applications.

\end{abstract}




\section{Introduction} \label{sec:introduction}
In many domains, it is useful to characterize relationships between features using network models.  For example, networks have been used to identify transcriptional patterns and regulatory relationships in genetic networks and applied as a way to characterize functional brain connectivity and cognitive disorders \cite{fox2007spontaneous,van2012influence,barch2013function,price2014multiple}.  One of the most common methods for inferring a network from observations is the Gaussian graphical model (GGM).  A GGM is defined with respect to a graph, in which the nodes correspond to joint Gaussian random variables and the edges correspond to the conditional dependencies among pairs of variables. A key property of the GGM is that the presence or absence of edges can be obtained from the precision matrix for multivariate Gaussian random variables  \cite{lauritzen1996graphical}. Similar to LASSO regression \cite{tibshirani1996regression}, we can infer the \new{sparse} graph structure via sparse precision matrix estimation with $l_1$-regularized maximum likelihood estimation.  This family of approaches is called graphical lasso (Glasso) \cite{friedman2008sparse,yuan2007model}. 

In practice, however, network inference may be complicated due to the presence of latent confounders. For example, when characterizing relationships between the stock prices of publicly trade companies, the existence of overall market and sector factors induces extra correlation between stocks \cite{choi2011learning}, which can obscure the underlying network structure between companies. 

\new{We focus on estimating $\Omegab = \Sigmab^{-1}$, the precision matrix encoding the graph structure of interest  \cite{yuan2007model, friedman2008sparse,cai2011constrained, parsana2019addressing}. }When latent confounders are present, the covariance matrix for the observed data, ${\Sigmab}_{obs}$ can be expressed as 
\begin{equation}
    \begin{aligned}
    {\Sigmab}_{obs}=\Sigmab+\Lb_{\Sigmab} , \label{eqn:splusl_sigma}
    \end{aligned}
\end{equation}
where the positive semidefinite matrix $\Lb_{\Sigmab}$ reflects the effect of latent confounders.
\newnew{One approach, which we will call PCA+GGM, is motivated by} confounders that affect the marginal correlation between observed variables \cite{parsana2019addressing} and uses principal component analysis (PCA) as a preprocessing step to remove the effect of these confounders \new{\cite{jolliffe2005principal, bartholomew2011latent}}. PCA removes the leading eigencomponents from ${\Sigmab}_{obs}$ which are assumed to be $\Lb_{\Sigmab}$, then a second stage of standard GGM inference follows.  
PCA+GGM has shown to be useful in  estimating gene co-expression networks, where correlated measurement noise and batch effects induce large extraneous marginal correlations between observed variables \cite{geng2018joint, leek2007capturing,stegle2011efficient,gagnon2013removing,freytag2015systematic,jacob2016correcting}.  

\newnew{Alternatively, equation \eqref{eqn:splusl_sigma} can be reparametrized as the observed precision matrix by applying the Sherman-Morrison identity} \cite{horn2012matrix} as, 
\begin{align}
\label{eqn:splusl_omega2}
   {\Omegab}_{obs} = {\Sigmab}_{obs}^{-1} = \Omegab - \Lb_{\Omegab},
\end{align} 
where $\Lb_{\Omegab}$ again reflects the effect of unobserved confounding, \newnew{i.e., unobserved nodes in a graph} \cite{chandrasekaran2011rank}.  One such approach, known as latent variable Gaussian Graphical Models (LVGGM), uses parameterization (\ref{eqn:splusl_omega2}) and involves joint inference for $\Omegab$ and $\Lb_{\Omegab}$.  The motivation behind LVGGM is to address the effect of unobserved variables in the complete data graph, which affect the partial correlations of the variables in the observed precision matrix $\Omegab$.  This perspective can be particularly useful when the unobserved variables would have been included in the graph, had they been observed. 


\new{In previous work, either parameterization  \eqref{eqn:splusl_sigma}, e.g., PCA+GGM,  or \eqref{eqn:splusl_omega2}, e.g., LVGGM, has been used, depending on the source of confounding and motivations as described earlier.} \new{Typically, LVGGM is appropriate when confounding is induced by unobserved nodes in a complete data graph of interest, whereas PCA+GGM is more appropriate when confounding corresponds to nuisance variables, e.g., from batch effects.}

In practice, the selection between these two methods will depend on user's belief about the type of confounding present in the observed data.  In this paper, our goal is to explore a way to address the effect of confounders in order to obtain the graph structure encoded in $\Omegab$ without making such selection. 

To achieve this goal, we generalize two seemingly different methods, PCA+GGM and LVGGM, into a common framework for addressing the effect of $\Lb_{\Sigmab}$ in order to obtain the graph structure encoded in $\Omegab = \Sigmab^{-1}$. Based on the generalization, We propose a new method, PCA+LVGGM, to address two different sources of confounding. The combined approach is more general, since PCA+LVGGM contains both LVGGM and PCA+GGM as special cases. To our knowledge, the two methods of addressing confounding have not been discussed together in the literature. 

In summary, in this paper,
\begin{itemize}
  \item we carefully compare PCA+GGM and LVGGM, and illustrate the connection and difference between these two methods. We first theoretically characterize the performance of PCA+GGM. \newnew{Different from \cite{parsana2019addressing} who derives asymptotic results, we provide a non-asymptotic convergence result for the performance of PCA+GGM.} \newnew{We observe that the performance of PCA+GGM are largely determined by the spectral structure of $\Sigmab$ and $\Lsigma$. }
  
  \item we propose PCA+LVGGM, which combines elements of PCA+GGM and LVGGM.  In simulation, PCA+LVGGM can outperform PCA+GGM or LVGGM when the data is corrupted by multiple confounders. We perform extensive numerical experiments to validate the theory, compare the performance of the three methods, and demonstrate the utility of our approach in two applications.
 
\end{itemize}

The remainder of this paper is organized as follows: In Section \ref{sec:setup}, we introduce the problem definition for GGM, LVGGM and PCA+GGM followed by a brief literature review. Next, we introduce our hybrid method, PCA+LVGGM, and present a novel theoretical results for PCA+GGM in Section \ref{sec:theory}. We use these result to analyze the similarities and differences between LVGGM and PCA+GGM.  In Section \ref{sec:simulation}, we compare the utility of the various approaches in the simulation setting. Finally, in Section \ref{sec:application} we apply the methods on two real world data sets.

We introduce some general notation used throughout the rest of the paper. For a vector $\vb = [v_1, \cdots, v_p]^\tT$, define $\Vert \vb \Vert_2 = \sqrt{\sum_{i=1}^p v_i^2}$ , $\Vert \vb \Vert_1 = \sum_{i=1}^p|v_i|$ and $\Vert\vb \Vert_{\infty} = \max_{i} |v_i|$. For a matrix $\Mb$, let $M_{ij}$ be its $(i,j)$-th entry. Define the Frobenius norm $\Vert \Mb \Vert_F = \sqrt{\sum_i\sum_j M_{ij}^2}$, the element-wise $\ell_1$-norm  $\Vert \Mb \Vert_1 = \sum_{i}\sum_{j}|M_{ij}|$ and  $\Vert \Mb \Vert_{\infty} = \max_{(i,j)}|M_{ij}|$. We also define the spectral norm $\Vert \Mb \Vert_2 = \sup_{\Vert \vb \Vert_2 \le 1}\Vert \Mb \vb\Vert_2$ and $\Vert\Mb \Vert_{L_1} = \max_j \sum_i|M_{ij}|$. The nuclear norm $\Vert \Mb \Vert_*$ is defined as the sum of the singular values of $\boldsymbol{M}$. When $\Mb \in \mathbb{R}^{p \times p}$ is symmetric, its eigendecomposition is $\Mb=\sum_{i=1}^p \lambda_i \vb_i \vb_i ^\tT$, where $\lambda_i$ is the $i$-th eigenvalue of $\Mb$, and $\vb_i$ is the $i$-th eigenvector. We assume that $\lambda_1 \ge \cdots \ge \lambda_p$. We call $\lambda_i \vb_i \vb_i ^\tT$ the $i$-th eigencomponent of $\Mb$.

\section{Problem setup and review} \label{sec:setup}
\subsection{Gaussian graphical models} \label{sec:ggm}
Consider a $p$-dimensional random vector $\Xb=[X_1,\cdots,X_p]^\text{T}$ with covariance matrix $\Sigmab$ and precision matrix $\Omegab$. Let $G=(V,E)$ be the graph associated with $\Xb$, where $V$ is the set of nodes (or vertices) corresponding to the elements of $\Xb$, and $E$ is the set of edges connecting nodes. The graph shows the conditional independence relations between elements of $\Xb$. For any pair of connected nodes, the corresponding pairs of variables in $\Xb$ are conditionally independent given the rest variables, i.e., $X_i \indep X_j | X_{ \backslash i,j}$, for all $(i,j) \notin E$. If $\Xb$ is multivariate Gaussian, then $X_i$ and $X_j$ are conditionally independent given other variables if and only if $\Omegab_{ij}=0$, and thus the graph structure can be recovered from the precision matrix of $\Xb$.

Without loss of generality, we assume the variable $\Xb$ has mean zero in this paper. Assuming that the graph is sparse, given a random sample $\{\Xb^{(1)},\cdots,\Xb^{(n)}\}$ following the distribution of $\Xb$, the Glasso estimate $\hat{\Omegab}_{Glasso}~$\cite{yuan2007model,friedman2008sparse} is obtained by solving the following log-likelihood based $\ell_1$-regularized function:
\begin{mini}
{\Omega\succ 0 }{\text{Tr}(\Omegab \Sigmab_n) - \lnn{\text{det} (\Omegab)}+\lambda \Vert \Omegab \Vert_1}
{\label{glasso-objective}}{},
\end{mini}
where $\text{Tr}$ denotes the trace of a matrix and $\Sigmab_n=({1}/{n})\sum_{k=1}^{n} \Xb^{(k)} {\Xb^{(k)}}^\text{T}$ is the sample covariance matrix. Many alternative objective functions for sparse precision matrix estimation have been proposed \cite{cai2011constrained, meinshausen2006high, peng2009partial, khare2015convex}.  The behavior and convergence rates of these approaches are well studied \cite{bickel2008regularized,bickel2008covariance,rothman2008sparse,lam2009sparsistency,ravikumar2011high,cai2016estimating}. 




In presence of latent confounders, Glasso and other GGM methods would likely recover a more dense precision matrix owing to spurious partial correlations introduced between observed variables. In other words, even when the underlying graph is sparse conditioned on the latent variables, the observed graph is dense marginally.

\subsection{Latent variable Gaussian graphical models} \label{sec:lvggm}

One method for controlling the effects of confounders is the Latent Variable Gaussian Graphical Model (LVGGM) approach first proposed  by \cite{chandrasekaran2012latent}.  They assume that the number of latent factors is small compared to the number of observed variables, and that the conditional dependencies among the observed variables conditional on the latent factors is sparse.  Consider a $(p+r)$ dimensional mean-zero normal random variable $\Xb=[\Xb_O, \Xb_H]^\tT$, where $\Xb_O \in \mathbb{R}^p$ is observed and $\Xb_H \in \mathbb{R}^r$ is latent. Let $\Xb$ have precision matrix $\Omegab \in \mathbb{R}^{(p+r)\times(p+r)}$ , and the submatrices $\Omegab_O \in \mathbb{R}^{p \times p}$, $\Omegab_H\in \mathbb{R}^{r \times r}$ and $\Omegab_{O,H}\in \mathbb{R}^{p \times r}$ specify the dependencies between observed variables, between latent variables and between the observed and latent variables respectively. By Schur complement, the inverse of the observed covariance matrix satisfies:
\begin{equation}
\label{eqn:omega_obs}
    {\Omegab}_{obs}={\Sigmab}^{-1}_{obs}=\Omegab_O - \Omegab_{O,H}\Omegab_H^{-1}\Omegab_{O,H}^\text{T}=\Omegab-\Lb_{\Omegab}.
\end{equation}
where $\Omegab=\Omegab_O$ encodes the conditional independence relations of interest and is sparse by assumption. $\Lb_{\Omegab}=\Omegab_{O,H}\Omegab_H^{-1}\Omegab_{O,H}^\tT$ reflects the low-rank effect of latent variables $\Xb_H$.  Based on this sparse plus low-rank decomposition,  \cite{chandrasekaran2012latent} proposed the following problem:
\begin{mini}[2]
{\Omegab, \Lb_{\Omegab}}{-\ell(\Omegab-\Lb_{\Omegab};{\Sigmab}_n)+\lambda \Vert \Omegab \Vert_1 +\gamma \Vert \Lb_{\Omegab} \Vert_*}
{\label{lvggm-objective}}{}
\addConstraint{\Lb_{\Omegab}}{\succeq \boldsymbol{0},~~~~}{\Omegab-\Lb_{\Omegab}}~{\succ \boldsymbol{0},}
\end{mini}
where $\Sigmab_n$ is the observed sample covariance matrix and $\ell(\Omegab,\Sigmab)=\lnn{(\det{(\Omegab))}}-\text{Tr}(\Omegab \Sigmab)$ is the Gaussian log-likelihood function. The $\ell_1$-norm encourages sparsity on $\Omegab$ and the nuclear norm encourages low-rank structure on $\Lb_{\Omegab}$. 

The sparse plus low-rank decomposition is ill-posed if $\Lb_{\Omegab}$ is not dense.  If \new{$\Lomega$ is sparse, then it} is indistinguishable from $\Omegab$, \new{that is, the sparse plus low-rank decomposition works well only when the sparse component is not low-rank and the low-rank component is not sparse \cite{chandrasekaran2011rank}. In practice, $\Lomega$ is dense if the latent variables have widespread effects.}. Identifiability of $\Omegab$ coincides with the incoherence condition in the matrix completion problem \cite{candes2009exact} which requires that $\vert \vb_k ^\tT \eb_i \vert$ is small
for all $k \in \{1,\cdots, r\}$ and $i \in \{1,\cdots, p\}$ where $\vb_k$ is the $k$-th eigenvector of $\Lb_{\Omegab}$ and $\eb_i$ is the $i$-th standard basis vector.  
More analysis on LVGGM can be found in \cite{agarwal2012noisy, meng2014learning}.

Finally, \cite{ren2012discussion} shows that the standard GGM approaches can still recover $\Omegab$ in the presence of latent confounding as long as the spectral norm of the low-rank component is sufficiently small compared to that of $\Sigmab$. This is also verified in our simulations. 

\subsection{PCA+GGM} \label{sec:pca+ggm}
Unlike LVGGM, which involves a decomposition of the observed data precision matrix, PCA+GGM involves a decomposition of the observed data covariance matrix:
\begin{equation}
\label{eqn:sigma_obs}
{\Sigmab}_{obs}={\Omegab}_{obs}^{-1}=(\Omegab-\Lb_{\Omegab})^{-1}=\Omegab^{-1}+\Lb_{\Sigmab}.
\end{equation}
Motivated by  confounding from measurement error and batch effects, \cite{parsana2019addressing} proposed the principal components correction (PC-correction) for removing $\Lb_{\Sigmab}$. Consider observed data $\Xb_{obs}$, such that
\begin{equation}
\label{eqn:xplusl}
    {\Xb}_{obs}=\Xb+\Ab\Zb,
\end{equation}
where $\Xb \sim N(\boldsymbol{0}, \Sigmab)$ and $\Zb \sim N(\boldsymbol{0}, \Ib_r)$. Matrix $\Ab \in \mathbb{R}^{p \times r}$ is non-random so that $\Lb_{\Sigmab}=\Ab\Ab^\tT$. In general, additional structural assumptions are needed to distinguish $\Lb_{\Sigmab}$ from $\Sigmab$. \newnew{As we will discuss in Section \ref{sec:theory},} \newnew{one of our contributions is to show that }if the \new{spectral} norm of $\Lb_{\Sigmab}$ is large relative to that of $\Sigmab$, then under mild conditions, $\Lb_{\Sigmab}$ is close to the sum of the first few eigencomponents of ${\Sigmab}_{obs}$. Therefore,  one can remove the first $r$ eigencomponents from ${\Sigmab}_{obs}$ \cite{parsana2019addressing}. This PCA+GGM method is described in Procedure \ref{procedure:PCAGGM}. Note that the number of principal components needs to be determined a priori, which we discuss in subsequent sections. 

\begin{figure}
\begin{minipage}[t]{0.49\textwidth}
\begin{algorithm}[H]
  \caption{PCA+GGM}
  \begin{algorithmic}[1]
    \Statex {\bf Input}: Sample covariance matrix, $\hat\Sigmab_{obs} = ({1}/{n})\sum_{k=1}^{n} \Xb_{obs}^{(k)} {\Xb_{obs}^{(k)}}^\tT$; rank of $\hat \Lb_{\Sigmab}$, $r$ 
    \Statex {\bf Output}: Precision matrix estimate, $\hat\Omegab$
    \State Estimate $\hat \Lb_{\Sigmab}$ from eigencomponents:
        $$\hat{{\Sigmab}}_{obs} = \sum_{i=1}^{p}{\hat{\lambda}_i \hat{\thetab}_i \hat{\thetab}_i^\tT},\ \hat \Lb_{\Sigmab} = \sum_{i=1}^{r}{\hat{\lambda}_i \hat{\thetab}_i \hat{\thetab}_i^\tT}$$
    \State Remove $\hat \Lb_{\Sigmab}$:
        $$\hat{\Sigmab} = \hat{{\Sigmab}}_{obs} - \hat \Lb_{\Sigmab}.$$
    \State Using $\hat\Sigmab$, compute $\hat\Omegab$ as solution to \eqref{glasso-objective} 
  \end{algorithmic}
  \label{procedure:PCAGGM}
\end{algorithm}
\end{minipage}
\hfill
\begin{minipage}[t]{0.49\textwidth}
\begin{algorithm}[H]
  \caption{PCA+LVGGM}
  \begin{algorithmic}[1]
    \Statex {\bf Input}: Sample covariance matrix, $\hat\Sigmab_{obs}$; rank of $\hat \Lb_{\Sigmab}$, $r_P$; rank of $\hat \Lb_{\Omegab}$, $r_L$ 
    \Statex {\bf Output}: Precision matrix estimate, $\hat\Omegab$
    \State Estimate $\hat \Lb_{\Sigmab}$ from eigencomponents:
        $$\hat{{\Sigmab}}_{obs} = \sum_{i=1}^{p}{\hat{\lambda}_i \hat{\thetab}_i \hat{\thetab}_i^\tT},\ \hat \Lb_{\Sigmab} = \sum_{i=1}^{r_P}{\hat{\lambda}_i \hat{\thetab}_i \hat{\thetab}_i^\tT}$$
    \State Remove $\hat \Lb_{\Sigmab}$:
        $$\hat{\Sigmab} = \hat{{\Sigmab}}_{obs} - \hat \Lb_{\Sigmab}.$$
    \State Using $\hat\Sigmab$, compute $\hat\Omegab$ as solution to \eqref{lvggm-objective} with $\gamma$ such that $\text{rank}(\hat \Lb_{\Omegab}) = r_L$
  \end{algorithmic}
  \label{procedure:PCALVGGM}
  \end{algorithm}
\end{minipage}
\end{figure}



\subsection{{Combining PCA+GGM and LVGGM}} \label{sec:pca+lvggm}

\newnew{As previously mentioned, while LVGGM and PCA+GGM solve the same problem, they are motivated by different sources of confounding.} In applications, the observed data may be corrupted by  multiple  sources of confounding, and thus elements from both methods are needed. For example, in the biological application discussed in Section \ref{sec:gene}, both batch effects and unmeasured biological variables likely confound estimates of graph structure between observed variables. \newnew{This motivates us to propose the PCA+LVGGM strategy described below.} 

As (\ref{eqn:omega_obs}) illustrated, the observed precision matrix $\Omegab^{'}$ may have been corrupted by a latent factor $\Lb_{\Omegab}$:
\begin{equation}
    \label{eqn:spluslomega_prime}
    {\Omegab}^{'}=\Omegab-\Lb_{\Omegab}.
\end{equation}
Now, rewriting (\ref{eqn:spluslomega_prime}) in terms of $\Sigmab= \Omegab^{-1}$ and ${{{\Sigmab}}^{'}}= {{{\Omegab}}^{'}}^{-1}$, applying the Sherman-Morrison identity on $\Omegab^{'}$ gives,
\begin{equation}
    \label{eqn:covpluslomega_prime}
    {\Sigmab}^{'}=\Sigmab + \Lb_{\Omegab}^{'},
\end{equation}
where $\Lb_{\Omegab}^{'}$ is still a low-rank matrix. If ${\Sigmab}^{'}$ is further corrupted by an additive latent factor represented by $\Lb_{\Sigmab}$, the following equation described the observed matrix ${\Sigmab}_{obs}$:
\begin{equation}
    \label{eqn:sum-of-low-rank-components}
    {{\Sigmab}_{obs}}={\Sigmab}^{'}+\Lb_{\Sigmab}=\Sigmab + \Lb_{\Omegab}^{'} +\Lb_{\Sigmab}
\end{equation}
In the above example, \newnew{following our theoretical analysis in Section \ref{sec:theory}}, if the spectral norm of $\Lb_{\Sigmab}$ is much larger than that of $\Sigmab$ and $\Lb_{\Omegab}^{'}$, then removing $\Lb_{\Sigmab}$ using the PC-correction is likely to be effective. \new{If the spectral norm of $\Lb_{\Omegab}^{'}$ is not much larger than that of $\Sigmab$, then PC-correction is not a good choice to remove $\Lb_{\Omegab}^{'}$. If $\Lb_{\Omegab}^{'}$ is dense, then $\Omegab$ and $\Lb_{\Omegab}$ can be well estimated by LVGGM.} In \eqref{eqn:sum-of-low-rank-components}, the overall confounding $\Lb_{\Omegab}^{'}+\Lb_{\Sigmab}$ is the sum of two low-rank components with different norms, we can consider using both methods: first remove $\Lb_{\Sigmab}$ via eigendecomposition, then apply LVGGM to estimate $\Omegab$ and $\Lb_{\Omegab}$. We call this procedure PCA+LVGGM and it is shown in Procedure \ref{procedure:PCALVGGM}. We discuss methods for setting \new{the ranks for $\Lb_{\Sigmab}$ (defined as $r_P$) and $\Lb_{\Omegab}^{'}$ (defined as $r_L$)} in Section \ref{sec:tuning}. 

\section{Theoretical analysis and model comparisons} \label{sec:theory}
In this section, we investigate the theoretical properties of PCA+GGM. \new{Our results reveal precisely how the eigenstructure of the observed covariance matrix affects the the performance of PCA+GGM. The theoretical analysis provides practical insights into when each graph estimation method should (or should not) be applied.} Specifically, we derive the convergence rate of PCA+GGM and compare it to that of LVGGM. As shown in theoretical analysis by \cite{parsana2019addressing}, the low-rank confounder can be well estimated by PC-correction if the number of features $p \to \infty$ with the number of observations $n$ fixed. We provide a non-asymptotic analysis depending on $p$ and $n$ and our result shows that the graph can be recovered exactly when $n \to \infty$ with fixed $p$. \new{When additional assumptions are satisfied, e.g. spiky covariance structure and incoherent eigenvectors, the convergence rate can be improved to $O(\sqrt{{\lnn p}/{n}})$}.

\subsection{Convergence analysis on PCA+GGM} \label{sec:convergence}
Without loss of generality, we consider the case of a rank-one confounder. Assume that we have a random sample of $p$-dimensional random vectors:
\begin{equation}
    \label{eqn:xplusl_proof}
    {\Xb}^{(i)}_{obs}=\Xb^{(i)}+{\sigma}\vb \Zb^{(i)}, \ i \in \{1,\cdots,n\},
\end{equation}
where $Cov(\Xb^{(i)})=\Sigmab$ and $\Zb^{(i)}$ is a univariate standard normal random variable. $\vb \in \mathbb{R}^p$ is a non-random vector with unit norm, and $\sigma$ is a non-negative scalar constant. Without loss of generality, we assume that 
$\Xb^{(i)} \indep \Zb^{(i)}$. To see how $\vb$ affects estimation, we assume that $\vb$ is the $k$-th eigenvector of $\Sigmab$. \newnew{The discussion on general $\vb$ is deferred to Section \ref{sec:analysis_thm}.} Therefore, the covariance matrix of ${\Xb}^{(i)}_{obs}$ is:
\begin{equation}
    \label{eqn:sigmaobs_proof}
    {\Sigmab}_{obs}=\Sigmab + \sigma^2 \vb \vb^\text{T} = \Sigmab_{-k}+(\lambda_k(\Sigmab)+\sigma^2)\vb \vb^\tT,
\end{equation}
where $\Sigmab_{-k}$ is the matrix $\Sigmab$ without the $k$-th eigencomponent and $\lambda_k(\Sigmab)$ is the $k$-th eigenvalue of $\Sigmab$. When $\sigma^2 > \lambda_1(\Sigmab)$, $\lambda_k(\Sigmab)+\sigma^2$ becomes the first eigenvalue of ${\Sigmab}_{obs}$, and $\vb$ is the corresponding first eigenvector. We remove the first principal component from the sample covariance matrix $\hat{{\Sigmab}}_{obs}$:
\begin{equation}
    \label{eqn:sigmaest_proof}
    \hat{\Sigmab}=\hat{{\Sigmab}}_{obs}-\hat{\lambda}_1 \hat{\thetab}_1 \hat{\thetab}_1^\tT,
\end{equation}
where $\hat{\lambda}_1$ is the first eigenvalue of $\hat{{\Sigmab}}_{obs}$ and $\hat{\thetab}_1$ is the first eigenvector of $\hat{{\Sigmab}}_{obs}$. Then we use $\hat{\Sigmab}$ to estimate $\Omegab$. We first show that under mild conditions, $\hat{\Sigmab}$ is close to $\Sigmab$. Following \cite[3.1]{bickel2008regularized}, \newup{we assume that $\Sigmaobs$ is well-conditioned such that:}
\begin{equation}
    \label{eqn:sigmaeig_proof}
    \newup{0 < \epsilon \le \lambda_p(\Sigmab_{obs}) \le \lambda_1(\Sigmab_{obs}) \le \frac{1}{\epsilon},}
\end{equation}
\newup{where $\epsilon$ is independent of $p$.}
\begin{theorem}
\label{thm1}
Let $\lambda_i$ be the $i$-th eigenvalue of $\Sigmab_{obs}$, \newup{$\thetab_i$ be the  $i$-the eigenvector of $\Sigmab_{obs}$,} and $\nu = \lambda_1 - \lambda_2$ be the eigengap of ${\Sigmab}_{obs}$. Suppose $\Sigmab_{obs}$ satisfies condition (\ref{eqn:sigmaeig_proof}) and ${\Xb}^{(i)}_{obs}$ is generated as (\ref{eqn:xplusl_proof}). Further assume that $\sigma^2 > \lambda_1(\Sigmab)$. Suppose $n \ge p$ and $\Vert \Sigmab \Vert_2 \sqrt{{(\nu + 1)}/{\nu^2}}\sqrt{{p}/{n}}\le {1}/{128}$, then:
\begin{align*}
    \Vert \hat{\Sigmab} - \Sigmab \Vert_{\infty} \le C_1\sqrt{\frac{\lnn p }{n}} + C_2\sqrt{\frac{\nu+1}{\nu^2}}\sqrt{\frac{p}{n}} + C_3 \sqrt{\frac{p}{n}} + \lambda_k(\Sigmab)\Vert \thetab_1 \thetab_1^\tT \Vert_{\infty},
\end{align*}
with probability greater than $1-C_4/p$ for constants $\Cis$.
\end{theorem}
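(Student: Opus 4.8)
The plan is to write $\hat{\Sigmab}-\Sigmab$ as a deterministic bias plus two stochastic fluctuation terms and bound each. Since $\sigma^2>\lambda_1(\Sigmab)$, (\ref{eqn:sigmaobs_proof}) gives that $\thetab_1=\vb$ is the leading eigenvector of $\Sigmab_{obs}$ with eigenvalue $\lambda_1=\lambda_k(\Sigmab)+\sigma^2$, so $\Sigmab_{obs}-\Sigmab=\sigma^2\thetab_1\thetab_1^T=\lambda_1\thetab_1\thetab_1^T-\lambda_k(\Sigmab)\thetab_1\thetab_1^T$. With $\Deltab:=\hat{\Sigmab}_{obs}-\Sigmab_{obs}$ and $\hat{\Sigmab}=\hat{\Sigmab}_{obs}-\hat{\lambda}_1\hat{\thetab}_1\hat{\thetab}_1^T$ this yields
\begin{equation*}
\hat{\Sigmab}-\Sigmab=\Deltab-\bigl(\hat{\lambda}_1\hat{\thetab}_1\hat{\thetab}_1^T-\lambda_1\thetab_1\thetab_1^T\bigr)-\lambda_k(\Sigmab)\,\thetab_1\thetab_1^T,
\end{equation*}
so the triangle inequality in $\Vert\cdot\Vert_\infty$ isolates the last term, which is exactly the bias term in the statement; note it is always $\le\lambda_k(\Sigmab)$ and is genuinely small only when $\thetab_1$ is incoherent, i.e.\ $\Vert\thetab_1\Vert_\infty^2$ is of order $1/p$.

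To control $\Vert\Deltab\Vert_\infty$: every entry of $\Deltab$ is an average of $n$ i.i.d.\ centered products of jointly Gaussian coordinates of $\Xb_{obs}$, and condition (\ref{eqn:sigmaeig_proof}) bounds all marginal variances by $M$, so these summands are sub-exponential with parameters depending only on $M$. A Bernstein bound plus a union bound over the $p^2$ entries, exactly as in \citet[3.1]{bickel2008regularized}, gives $\Vert\Deltab\Vert_\infty\le C_1\sqrt{\log p/n}$ on an event of probability at least $1-C_4p^{-6}$ with $C_1=C_1(M)$; this is the first term of the bound.

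For the eigencomponent term I would pass to the spectral norm ($\Vert\Mb\Vert_\infty\le\Vert\Mb\Vert_2$) and exploit the structure of the sampling error by decomposing the sample covariance along the confounder direction:
\begin{equation*}
\hat{\Sigmab}_{obs}=\frac1n\sum_i\Xb^{(i)}{\Xb^{(i)}}^{T}+\sigma\bigl(\vb\hat{\wb}^{T}+\hat{\wb}\vb^{T}\bigr)+\sigma^2\Bigl(\frac1n\sum_i(\Zb^{(i)})^2\Bigr)\vb\vb^{T},
\end{equation*}
where $\hat{\wb}=\frac1n\sum_i\Zb^{(i)}\Xb^{(i)}$ and the population counterpart is $\Sigmab+\sigma^2\vb\vb^{T}$. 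The $\sigma^2$-magnitude term is a scalar multiple of $\vb\vb^{T}=\thetab_1\thetab_1^{T}$, hence aligned with the leading eigenvector: it only perturbs $\hat{\lambda}_1$ (controlled by chi-square concentration of $\frac1n\sum_i(\Zb^{(i)})^2$) and never moves the eigenvector. The pieces that do move the eigenvector are the sampling error of $\Sigmab$ and the cross term $\sigma(\vb\hat{\wb}^{T}+\hat{\wb}\vb^{T})$, with operator norms $O(\Vert\Sigmab\Vert_2\sqrt{p/n})$ and $O(\sigma\sqrt{\Vert\Sigmab\Vert_2\,p/n})$ respectively by standard Gaussian sample-covariance concentration (valid since $n\ge p$), each on an event of probability at least $1-C_5e^{-p}$. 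Feeding this into Weyl's inequality ($|\hat{\lambda}_1-\lambda_1|\le\Vert\Deltab\Vert_2$) and the Davis--Kahan $\sin\Theta$ theorem in its one-sided form (which needs only the part of $\Deltab\thetab_1$ orthogonal to $\thetab_1$, i.e.\ the $\Vert\Sigmab\Vert_2$-governed pieces above, and is applicable because the hypothesis $\Vert\Sigmab\Vert_2\sqrt{(\nu+1)/\nu^2}\sqrt{p/n}\le\frac1{128}$ forces this perturbation well below $\nu$) gives $\Vert\hat{\thetab}_1\hat{\thetab}_1^{T}-\thetab_1\thetab_1^{T}\Vert_2=O\bigl(\sigma\sqrt{\Vert\Sigmab\Vert_2}\,\sqrt{p/n}\,/\nu\bigr)$; using $\sigma^2\le\lambda_1=\nu+\lambda_2(\Sigmab_{obs})\le\nu+\lambda_1(\Sigmab)$, so $\sigma=O(\sqrt{\nu+1})$, this collapses to the term $C_3\sqrt{(\nu+1)/\nu^2}\sqrt{p/n}$ with $C_3$ depending on $\lambda_1(\Sigmab)$, while the residual $C_2\sqrt{p/n}$ piece comes from $|\hat{\lambda}_1-\lambda_1|$ together with the $\lambda_1\cdot O(\sqrt{p/n}/\nu)$ cross contributions.

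Summing the three pieces and union-bounding over the two failure events gives the theorem with probability at least $1-C_4p^{-6}-C_5e^{-p}$. The main obstacle is the eigencomponent step: the naive bound $\Vert\Deltab\Vert_2$ is inflated by the large spike $\sigma^2$, and the point of the confounder-direction decomposition is that the $\sigma^2$-scale part is aligned with $\thetab_1$ and hence harmless to the eigenvector, leaving the direction-moving error governed by $\Vert\Sigmab\Vert_2$ rather than $\lambda_1$ --- which is precisely why $\Vert\Sigmab\Vert_2$, not $\lambda_1(\Sigmab_{obs})$, appears in the smallness hypothesis --- after which one must carefully route the estimate through Weyl and the one-sided Davis--Kahan inequality, keeping track that $\nu$ enters both as the denominator and, through $\lambda_1=\nu+\lambda_2$ and $\sigma^2\le\lambda_1$, in the numerator.
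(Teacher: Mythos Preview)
Your proposal is correct and follows essentially the same route as the paper: the identical three-term decomposition of $\hat{\Sigmab}-\Sigmab$, the Bickel--Levina entry-wise bound for $\Vert\Deltab\Vert_\infty$, Weyl for the eigenvalue, and the one-sided Davis--Kahan argument (the paper uses \citet[Theorem 8.5]{wainwright2019high} with $\tilde{\pb}=\Ub_2^T\Pb\thetab_1$) after splitting the perturbation into the $\sigma^2$-aligned chi-square piece, the cross term $\hat{\wb}$, and the sample-covariance error of $\Sigmab$. Your key observation that the $\sigma^2$-scale piece is parallel to $\thetab_1$ and hence invisible to $\Ub_2^T\Pb\thetab_1$ is exactly the mechanism the paper exploits, and your route to $\sigma=O(\sqrt{\nu+1})$ via $\sigma^2\le\lambda_1=\nu+\lambda_2(\Sigmab_{obs})$ matches how the paper arrives at the $\sqrt{(\nu+1)/\nu^2}$ factor.
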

\begin{proof}[\textbf{\upshape Proof:}] 
\newup{Recall when $\sigma^2 > \lambda_1(\Sigmab)$, $\lambda_k(\Sigmab)+\sigma^2$ becomes the first eigenvalue of ${\Sigmab}_{obs}$, and $\vb$ becomes its first eigenvector.} By \eqref{eqn:sigmaobs_proof} and \eqref{eqn:sigmaest_proof},
\begin{align*}
    \hat{\Sigmab}-\Sigmab &= (\hat{{\Sigmab}}_{obs}-\hat{\lambda}_1 \hat{\thetab}_1 \hat{\thetab}_1^\tT) - ({\Sigmab}_{obs} - \lambda_1 \thetab_1 \thetab_1^\tT +  \newup{\lambda_k(\Sigmab) \thetab_1 \thetab_1^\tT)}\\
    &= (\hat{{\Sigmab}}_{obs}-{\Sigmab}_{obs})+(\lambda_1 \thetab_1 \thetab_1^\tT - \hat{\lambda}_1 \hat{\thetab}_1 \hat{\thetab}_1^\tT) - \newup{\lambda_k(\Sigmab) \thetab_1 \thetab_1^\tT.}
\end{align*}
At a high level, we bound $\Vert \hat{\Sigmab}-\Sigmab \Vert_{\infty}$ by bounding the norms of ${\Sigmab}_{obs}-\hat{{\Sigmab}}_{obs}$, $\lambda_1-\hat{\lambda}_1$ and $\thetab_1-\hat{\thetab}_1$. The details of the complete proof is in Appendix \ref{pf-outline}.
\end{proof}

The bound in Theorem \ref{thm1} can be further simplified as $C_s\sqrt{{p}/{n}}+\lambda_k(\Sigmab)\Vert \thetab_1 \thetab_1^\tT \Vert_{\infty}$ for some large constant $C_s$. We express it in the above form because it provides more insight on how each term affects the result. Now we analyze the bound in Theorem \ref{thm1} in detail.

The  error bound in Theorem \ref{thm1} depends on the largest eigenvalue of ${\Sigmab}_{obs}$, the eigengap $\nu = \lambda_1({\Sigmab}_{obs})-\lambda_2({\Sigmab}_{obs})$, the eigenvector of the confounder and $n$ and $p$. 
The term $\sqrt{{(\nu + 1)}/{\nu^2}}$ shows that if the eigengap $\nu$ is larger, the estimation error bound will be smaller. Recall that when $\sigma^2 > \lambda_1(\Sigmab)$, $\lambda_k(\Sigmab)+\sigma^2$ becomes the first eigenvalue of $\Sigmab_{obs}$. Hence if $\sigma^2 \gg \lambda_1(\Sigmab)$, then the eigengap $\nu$ is large. The fact that a larger eigengap leads to a better convergence rate is closely related to the concept of ``effective dimension'' (also known as ``effective rank''). The effective rank, $r(\boldsymbol{M})$, of any positive semidefinite matrix $\Mb \in \mathbb{R}^{p \times p}$,  \new{is defined as}:
\begin{equation}
    \label{eqn:eff_rank}
    r(\boldsymbol{M}) := \frac{\text{Tr}(\Mb)}{\lambda_1(\Mb)}=\frac{\sum_{i=1}^{p}\lambda_i(\Mb)}{\lambda_1(\Mb)}\le C,
\end{equation}
where $C \ge 1$ can be viewed as the effective dimension of $\Mb$ \cite{meng2014learning,koltchinskii2017concentration,wainwright2019high}. $M$ is approximately low-rank if the first few eigenvalues are much larger than the rest, and $r(\boldsymbol{M})$ will be much smaller than the observed dimension $p$.  In this case, we can significantly reduce the magnitude of the dependence on  $O(\sqrt{{p}/{n}})$ by replacing $p$ with effective dimension $C$, in (\ref{eqn:eff_rank}). \new{We provide a sharper bound for matrices with small effective rank in Theorem \ref{thm1new}.}

Next, we reason about the last term in the error bound, $\lambda_k(\Sigmab)\Vert \thetab_1 \thetab_1^\tT \Vert_{\infty}$. \new{In practice, in many sparse graphs inferred from real world data, the first few eigenvalues of $\Sigmab$ are much larger than the rest, i.e., $\lambda_1(\Sigmab) \gg \lambda_k(\Sigmab)$ for large enough $k > 1$. This is also true for many common graph data generating models (see Appendix \ref{app-eigen}). }This means that if the eigenvector of the low-rank component is one of the first few eigenvectors of $\Sigmab$, then the error bound will be much larger. This result shows that the first few eigencomponents play a more important role in determining the structure of $\Sigmab$ and its inverse. Thus, the error of the PCA+GGM estimator will be large if those first few eigencomponents are removed by PC-correction. 

Note that $\Vert \thetab_1 \thetab_1^\tT \Vert_{\infty}$ is upper bounded by 1, since $\thetab_1$ is the eigenvector of some matrix, and thus has unit Euclidean norm; however, $\Vert \thetab_1 \thetab_1^\tT \Vert_{\infty}$ can be much smaller than 1 when $\thetab_1$ is \new{incoherent with standard basis, \newnew{e.g. dense}}. One extreme case is when all the elements of $\thetab_1$ are ${1}/{\sqrt{p}}$, in which case $\Vert \thetab_1 \thetab_1^\tT \Vert_{\infty} = {1}/{p}$. This setup corresponds to a scenario in which the confounder has a widespread effect over all the $p$ variables in the signal, which is in accordance with one requirement in LVGGM. LVGGM requires the low-rank component to be dense. For both PCA+GGM and LVGGM, more "widespread" confounding implies smaller estimation error. \new{Based on these observations, we provide a tighter bound under small effective rank and incoherent $\thetab_1$.}

\new{
\begin{theorem}
\label{thm1new}
Following the same notations and assumptions for $\Sigmab_{obs}$ in Theorem \ref{thm1} and again assume that $\sigma^2 > \lambda_1(\Sigmab)$. Further assume that \newup{the effective rank of $\Sigmab_{obs}$ (defined in \eqref{eqn:eff_rank}) $r(\Sigmab_{obs}) \asymp p/\lambda_1$, the eigengap $\nu$ satisfies $\lambda_1 \asymp \nu \gg p\sqrt{\lnn p/n}$}, and $\thetab_1$ is incoherent, i.e. $\| \thetab \|_\infty \le C_1/\sqrt{p}$. Then:
\begin{align*}
    \newup{\Vert \hat{\Sigmab} - \Sigmab \Vert_{\infty} \le C_2(\sqrt{\frac{\lnn p }{n}} + \frac{1}{p})},
\end{align*}
with probability greater than $1-C_3/p$ for some $\Cis$. 
\end{theorem}
\begin{proof}[\textbf{\upshape Proof:}] 
The complete proof is in Appendix \ref{pf-outline}.
\end{proof}
}

After obtaining $\hat{\Sigmab}$, we can use Glasso, CLIME \cite{cai2011constrained} or any sparse GGM estimation approach to estimate $\Omegab$. We can have a good estimate of $\Omegab$ when $\Vert \hat{\Sigmab} - \Sigmab \Vert_{\infty}$ is small. With the same input $\hat{\Sigmab}$, the theoretical convergence rate of the estimate obtained from CLIME is of the same order as the Glasso estimate. The derivation of the error bound of Glasso requires the irrepresentability condition and restricted eigenvalue conditions (see \cite{ravikumar2011high}). Due to the length of the article, we only show the proof of the edge selection consistency for CLIME, meaning that for the theoretical analysis, we apply CLIME method after obtaining $\hat{\Sigmab}$. 

The CLIME estimator $\hat{\Omegab}_1$ is obtained by solving:
\begin{mini}
{\Omegab}{\Vert \Omegab \Vert_1,~~~~~\text{subject to~~~~}\Vert \hat{\Sigmab} \Omegab - \Ib \Vert_{\infty} \le \lambda_n.}
{\label{eqn:obj_clime}}{}
\end{mini}
Since $\hat{\Omegab}_1$ might not be symmetric, we need the symmetrization step to obtain $\hat{\Omegab}$.

Following \cite{cai2011constrained}, we assume that $\Omegab$ is in the following class:
\begin{equation}
    \label{eqn:class_clime}
    U(s_0,M_0)=\{\Omegab=\omega_{ij}: \ \Omegab \succ \boldsymbol{0}, \Vert \Omegab \Vert_{L_1}<M_0, \max_{1 \le i \le p} \sum_{i=1}^p I_{\{\omega_{ij} \ne 0\}}\le s_0(p)\},
\end{equation}
where we allow $s_0$ and $M_0$ to grow as $p$ and $n$ increase. With $\hat{\Sigmab}$ obtained from equation (\ref{eqn:sigmaest_proof}) as the input of (\ref{eqn:obj_clime}), we have the following result.

\begin{theorem}
\label{thm2}
Suppose that assumptions in Theorem \ref{thm1} hold, $\Omegab \in U(s_0, M_0)$, and $\lambda_n$ is chosen as
\begin{align*}
    M_0(C_1\sqrt{\frac{\lnn p }{n}}+ C_2 \sqrt{\frac{p}{n}}+C_3 \sqrt{\frac{\nu+1}{\nu^2}}\sqrt{\frac{p}{n}}+\lambda_k(\Sigmab)\Vert \thetab_1 \thetab_1^T \Vert_{\infty}),
\end{align*}
then:
\begin{align*}
    \Vert \Omegab-\hat{\Omegab} \Vert_{\infty} \le 2 M_0 \lambda_n,
\end{align*}
with probability greater than $1-C_4/p$. $C_i$'s are defined the same as in Theorem \ref{thm1}.

\new{When assumptions in Theorem \ref{thm1new} hold, $\Omegab \in U(s_0, M_0)$, and $\lambda_n^{'}$ is chosen as $M_0(C_5\sqrt{{\lnn p}/{n}})$, then:
\begin{align*}
    \Vert \Omegab-\hat{\Omegab} \Vert_{\infty} \le 2 M_1 \lambda_n^{'},
\end{align*}
with probability greater than $1-C_6/p$ for $\Cis$.}
\end{theorem}
\begin{proof}[\textbf{\upshape Proof:}] 
The main steps follow the proof of Theorem 6 in \cite{cai2011constrained}. The complete proof is in Appendix \ref{pf-thm2}.
\end{proof}

Therefore, if the minimum magnitude of $\Omegab$ is larger than the error bounds above, then we can have exact edge selection with high probability.

\subsection{\new{Generalizations}} \label{sec:analysis_thm}
The analysis in previous sections assumes that the low-rank confounder has rank 1, is independent of $\Xb$ and the eigenvector of the covariance of the $\Lsigma$ is one of the eigenvectors of $\Sigmab$. We now comment on more general settings.
\begin{itemize}
    \item \new{\textit{Higher rank:} For ease of interpretation, we assume that the confounder can be expressed as $\sum_{i=1}^r\sigma_i\vb_i$.  If $\min_{i}\sigma_i^2 > \lambda_{1}(\Sigmab)$, then when running PCA+GGM, the low-rank component can be removed due to its larger norm compared to that of $\Sigmab$.} 
    \new{According to Theorem \ref{thm1}, PCA+GGM can still perform well if $\vb_i$'s are not the top eigenvectors of $\Sigmab$.}
    
    \item \new{\textit{\newup{Arbitrary $\vb$:}} \newup{In Theorem \ref{thm1}, $\vb$ is assumed to be an eigenvector of $\Sigmab$, but the result in Theorem \ref{thm1} provides insights about the bound with an arbitrary $\vb$. Assume that $\vb_1$ is the eigenvector of confounder 1 and $\vb_2$ is the eigenvector of confounder 2. $\vb_1$ is the $i$-th eigenvector of $\Sigmab$ and $\vb_2$ is the $j$-th eigenvector of $\Sigmab$. In Theorem \ref{thm1}, we show that the bound for estimating $\Sigmab$ with confounder 1 is larger than thr bound with confounder 2 if $i < j$ because $\vb_1$ is associated with a larger eigenvalue. If a confounder's eigenvector $\vb$ is a linear combination of $\vb_1$ and $\vb_2$, the bound should be between the bound with confounder 1 and the bound with confounder 2. Following this idea, we now show how our analysis works for an arbitrary $\vb$ in more detail.} When the eigenvector of the low-rank component is not one of the eigenvectors of $\Sigmab$, we can express that vector using the eigenvectors of $\Sigmab$ as basis. For example, we assume that in \eqref{eqn:xplusl_proof}, $\vb = \sum_{i=1}^p a_i\thetab_i$, where $\thetab_i$ means the $i$-th eigenvector of $\Sigmab$. We say $\vb$ is closely aligned with $\thetab_1$ if $|a_1|$ is significantly large compared with other $|a_i|$'s. Equivalently, $|\vb^\tT\thetab_1| \gg |\vb^\tT\thetab_i|$ for $i \ne 1$, if $\vb$ is closely aligned with $\thetab_1$. In this case, the first eigencomponent of $\Sigmab$ will be removed, thus leading to a poor estimate of $\Sigmab$ using PC-correction. If the eigenvector of the low-rank component is not closely aligned with the first few eigenvectors of $\Sigmab$, then we won't lose too much useful information when removing the top principal components and PCA+GGM can still perform well.} \newup{In general, $\vb$ can hardly be closely aligned with the first few eigenvectors of $\Sigmab$. We verify this with simulations in Section \ref{sec:sim-pca+lvggm}. We randomly generate multiple confounders and notice that their eigenvecors are not closely aligned with the top eigenvectors of $\Sigmab$. Thus, PCA-based methods are effective to remove the confounders.}

\end{itemize}

\subsection{Comparison with LVGGM} \label{sec:compare_thm}
Now we compare LVGGM to PCA+GGM in more detail. 
We observe that PCA+GGM can be viewed as a supplement to LVGGM. The assumptions of PCA+GGM can be well satisfied when the assumptions of LVGGM cannot be satisfied. In (\ref{eqn:sigmaobs_proof}), now let $\vb$ be the $k$-th eigenvector of $\Omegab$ (thus the $(p-k+1)$-th eigenvector of $\Sigmab$), the Sherman-Morrison identity gives
\begin{equation}
    \label{eqn:compare_thm}
    {\Sigmab}^{-1}_{obs}=\Omegab-\frac{\lambda_k(\Omegab)^2}{\lambda_k(\Omegab)+(1/\sigma^2)}\vb \vb^\tT = \Omegab - \Lb_{\Omegab}.
\end{equation}
We can see that as $\sigma$ increases, ${\lambda_k(\Omegab)^2}/{(\lambda_k(\Omegab)+(1/\sigma^2))}$ increases. In the simulations in Section \ref{sec:simulation}, we observe that LVGGM performs poorly when $\vb$ is closely aligned with the first few eigenvectors of $\Omegab$ (thus the last few eigenvectors of $\Sigmab$). \new{One way to interpret why LVGGM does not work well under this setting is because the nuclear norm penalty in LVGGM will shrink large eigenvalues. Specifically, when $k$ is small and $\sigma^2$ is large, ${\lambda_k(\Omegab)^2}/{(\lambda_k(\Omegab)+(1/\sigma^2))}$ is large. Therefore, the nuclear norm regularization in LVGGM introduces larger bias. Additionally, when $k$ is small, $\vb$ is one of the top eigenvectors of $\Omegab$. We empirically observe that the top eigenvectors of $\Omegab$ can be coherent with standard basis.} \newup{This observation coincides with the identifiability issue of LVGGM, i.e., a sparse low-rank component worsens the performance of LVGGM.}

This observation is consistent with the conclusion in \cite{agarwal2012noisy}. They impose a spikiness condition, which is a weaker condition than the incoherence condition in \cite{chandrasekaran2012latent}. The spikiness condition requires that $\Vert \Lb_{\Omegab} \Vert_{\infty}$ is not too large. \eqref{eqn:compare_thm} shows that $\Lb_{\Omegab}$ tends to have a larger spectral norm when $\vb$ is aligned with the first few eigenvectors of $\Omegab$ and $\sigma$ is large, since in this case, ${\lambda_k(\Omegab)^2}/{(\lambda_k(\Omegab)+(1/\sigma^2))}$ is close to $\lambda_1(\Omegab)$. The large norm of $\Lb_{\Omegab}$ implies that the spikiness condition is not well satisfied, thus the error bound of LVGGM is large. Note, however, that the first few eigenvectors of $\Omegab$ are the last few eigenvectors of $\Sigmab$. Our analysis shows that the error bound of the estimate of PCA+GGM is small when $\vb$ is aligned with the first few eigenvectors of $\Omegab$ and $\sigma$ is large.


\subsection{PCA+LVGGM}
\label{sec:pca+lvggm_thm}

In this section we discuss the PCA+LVGGM method briefly. We use the same formulation as (\ref{eqn:spluslomega_prime}) to (\ref{eqn:sum-of-low-rank-components}). We claim that PCA+LVGGM outperforms using PCA+GGM or LVGGM individually when $\Lb_{\Sigmab}$'s spectral norm is large compared to that of $\Lb_{\Omegab}^{'}$ and $\Sigmab$, $\Lb_{\Sigmab}$'s vectors are not aligned with the first few eigenvectors of $\Sigmab$, and the spectral norm of $\Lb_{\Omegab}^{'}$ is not significantly larger than that of $\Sigmab$. 
This is because based on Theorem \ref{thm1} and \ref{thm2}, PCA+GGM is effective only when the spectral norm of the low-rank confounding is larger than that of the signal. PC-correction, however, can only effectively remove $\Lb_{\Sigmab}$ but not $\Lb_{\Omegab}^{'}$ because the norm of $\Lb_{\Omegab}^{'}$ is not significantly larger than that of $\Sigmab$. In contrast, LVGGM can estimate $\Lb_{\Omegab}^{'}$ well, but not $\Lb_{\Sigmab}$ because it has a larger spectral norm and its eigenvectors might be aligned with the first few eigenvectors of $\Omegab$. 

\subsection{Tuning parameter selection} \label{sec:tuning}

In both LVGGM and PCA+GGM there are crucial tuning parameters to select. For LVGGM, recall that $\lambda$ controls the sparsity of $\Omegab$ and $\gamma$ controls the rank of $\Lb_{\Omegab}$. \cite{chandrasekaran2012latent} argues that $\lambda$ should be proportional to $\sqrt{{p}/{n}}$, the rate in the convergence analysis, and choose $\gamma$ among a range of values that makes the graph structure of $\hat{\Omegab}$ stable \cite[see][for more detail]{chandrasekaran2011rank}. 


When using PCA+GGM, we need to determine the rank first (i.e., how many principal components should be removed). \cite{leek2007capturing, leek2012sva} suggest using the \texttt{sva} function from \texttt{Bioconductor}, which is based on parallel analysis \cite{horn1965rationale,buja1992remarks,lim2019determining}. Parallel analysis compares the eigenvalues of the sample
correlation matrix to the eigenvalues of a random
correlation matrix for which no factors are assumed. Given the number of principal components to remove, we can use model selection tools such as AIC, BIC or cross-validation to choose the sparsity parameter in Glasso. 
One may also decide how many principal components to remove by considering the number of top eigenvalues of the observed covariance matrix (see Section \ref{sec:convergence} and Section \ref{sec:analysis_thm}). \new{Note that these rank selection approaches perform well when the low-rank confounding has large enough spectral norm compared to the norm of signal (more details on these conditions are discussed in \cite{leek2012sva, lim2019determining}). We will see in later sections that these conditions can be satisfied in many real-world applications. When the spectral norm of the latent confounder is small, approaches which do not account for confounding, such as Glasso and CLIME, are actually robust enough to perform well even when confounding exists. This is theoretically proved by \cite{ren2012discussion} and our simulations in next section also confirm this.}


The PCA+LVGGM method has three tuning parameters: the rank of $\Lb_{\Sigmab}$, $\gamma$ and $\lambda$.  To start, we first look at eigenvalues or use the \texttt{sva} package to determine the total rank of the low-rank component, $\Lb_{\Sigmab} + \Lb^{'}_{\Omegab}$. \new{We think it is natural to determine the rank of confounder first because we will see in later applications, we can have some domain knowledge on the ranks of coufounders, e.g. in finance applications, some financial theory suggests the number of latent variables in the market.} We then need to partition the total rank between $\Lb_{\Sigmab}$ and $\Lb'_{\Omegab}$.  If we determine that rank$(\Lb_{\Sigmab} + \Lb'_{\Omegab}) = k$, we look for an eigengap in the first $k$  eigenvalues and allocate the largest $m < k$ eigenvalues for PC-removal. Our experiments in Section \ref{sec:application} show that domain knowledge can be used to motivate the number of components for PC-removal. After removing the principal components, we choose $\gamma$ in LVGGM so tha $\Lb'_{\Omegab}$ is approximately rank $k-m$ . We observe that when running LVGGM, the rank won't change for a range of $\lambda$ values when using a fixed $\gamma$. Thus, it suffices to fix $\gamma$ first to control the rank, then determine $\lambda$ to control the sparsity.

Practically, network estimation is often used to help exploratory data analysis and hypothesis generation. For these purposes, model selection methods such as AIC, BIC or cross-validation may tend to choose models that are too dense \cite{danaher2014joint}. This fact can also be observed by our experiments. Therefore, we recommend that model selection should be based on prior knowledge and practical purposes, such as network interpretability and stability, or identification of important edges with low false discovery rate \cite{meinshausen2010stability}. Thus, we recommend that the selection of tuning parameters should be driven by applications. For example, for biological applications, the model should be biologically plausible, sufficiently complex to include important information and sparse enough to be interpretable. In this context, a robustness analysis can be used to explore how edges change over a range of tuning parameters.

\section{Simulations} \label{sec:simulation}
{In this section, numerical experiments illustrate the utility of each sparse plus low rank method.  In Section \ref{sec:sim-part1} we illustrate the behavior of Glasso, LVGGM and PCA+GGM under different assumptions about rank-one confounding. In Section \ref{sec:sim-pca+lvggm}, we show the efficacy of PCA+LVGGM in a variety of simulation scenarios.  In all experiments, we set $p = 100$ and use the scale-free and random networks from \textsf{huge.generator} function in \textsf{R} package \textsf{huge} \cite{zhao2012huge}. \new{To generate random networks, each pair of off-diagonal elements are randomly set, while the graph is generated using B-A algorithm under scale-free structures \cite{albert2002statistical}.} Due to space limit, we only include results on the scale-free structure. }


\subsection{The efficacy of LVGGM and PCA+GGM} \label{sec:sim-part1}
We compare the relative performance of PCA+GGM, LVGGM and Glasso in the presence of a rank-one confounder, $\Lb$. Guided by our analysis in Section \ref{sec:theory}, we show that the relationship between $\Lb$ and the eigenstructure of $\Sigmab$ determines the performance of these three methods. We first generate the data with
\begin{align*}
{\Xb}_{obs}^{(i)}&=\Xb^{(i)}+\Lb^{(i)}, ~~\Lb^{(i)} = {\sigma}\Vb \Zb^{(i)}, ~~i \in \{1,\cdots,n\},
\end{align*}
where $\Xb^{(i)} \in \mathbb{R}^p$ is normally distributed with mean zero and covariance matrix $\Sigmab$. $\Zb^{(i)} \in \mathbb{R}^r$, the low-rank confounder, follows a normal distribution with mean zero and  identity covariance matrix. $\Vb \in \mathbb{R}^{p \times r}$ is a non-random semi-orthogonal matrix satisfying $\Vb^\tT \Vb = \Ib$, and $\sigma \in \mathbb{R}$ represents the magnitude of the confounder. Without loss of generality, we assume that $\Xb^{(i)}$ and $\Zb^{(i)}$ are independent. We illustrate the performance of different methods under various choices for, $\Vb$, the eigencomponents of $\Lb$. 

We first set $r=1$, $p=100$ and $n=200$. The largest eigenvalue of $\Sigmab$ is around 5. We use $\vb_i$ to denote the $i$-th eigenvector of $\Sigmab$. When examining the effect of $\sigma$, we choose the $95$-th eigenvector of $\Sigmab$ as $\Vb$ to ensure that $\Vb$ is not closely aligned with the first few eigenvectors of $\Sigmab$. We then compare the cases with $\sigma^2 = 20$ and $3$. Next, we examine the effect of eigenvectors. We fix $\sigma^2$ as 20, and use the $i$-th eigenvector of $\Sigmab$ as $\Vb$, where $i \in \{1, 60, 95 \}$. Following previous notation, we use $\vb_1$, $\vb_{60}$ and $\vb_{95}$ as $\Vb$. 1 is chosen as the rank for PC-correction and LVGGM. We generate ROC curves \cite[9.2.5]{hastie2009elements} for each method based on  50 simulated samples and use the average to draw the ROC curves (Fig.~\ref{fig:sim-part1}). We truncate the ROC curves at FPR=0.2, since the estimates with large FPR are typically less useful in practice.

\begin{figure}[ht!]
    \centering
    \includegraphics[width=0.9\textwidth]{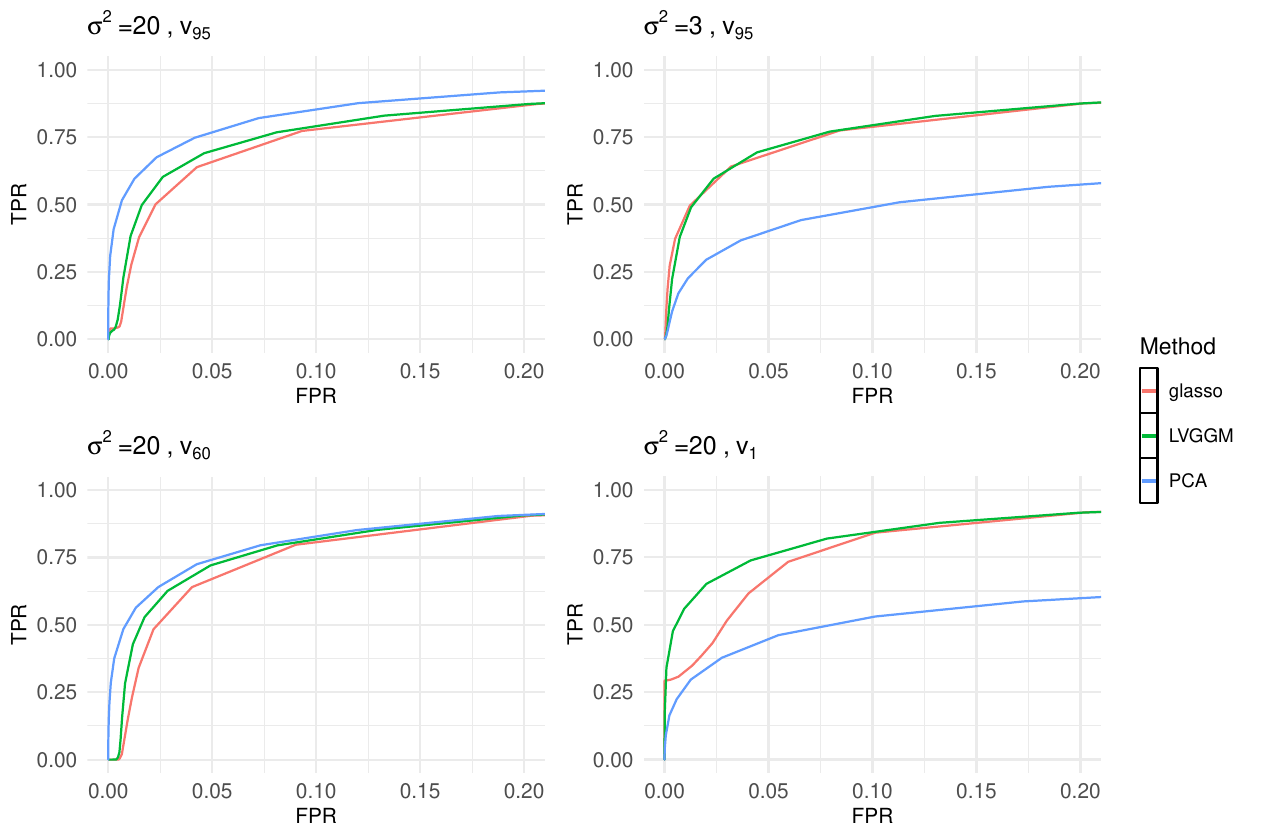}
    \caption{$n=200$. The low-rank component has rank 1. $\sigma^2$ is the magnitude of the low-rank component, $\vb_i$ is the $i$-th eigenvector of $\Sigmab$. The first row illustrates the effect of $\sigma$: we use the $95$-th eigenvector of $\Sigmab$ as $\Vb$, and set $\sigma^2$ to 20 and 3 from left to right. The second row illustrates the effect of $\Vb$: we fixed $\sigma^2$ as 20, and use the $60$-th and 
   first eigenvector of $\Sigmab$ as $\Vb$ from left to right. PCA+GGM works the best when $\sigma^2$ is large and $\Vb$ is not aligned with the first eigenvector of $\Sigmab$. LVGGM works the best when $\sigma^2$ is large and $\Vb$ is not aligned with the last eigenvector of $\Sigmab$. When $\sigma^2$ is small, Glasso works as well as the other two.}
    \label{fig:sim-part1}
\end{figure}

From Fig.~\ref{fig:sim-part1}, we observe that when the confounder has large norm and its eigenvectors are not closely aligned with the first few eigenvectors of $\Sigmab$, PCA+GGM performs better than LVGGM and Glasso. LVGGM preforms the best when the confounder has large norm and its eigenvectors are not aligned with the last few eigenvectors of $\Sigmab$ (also the first few eigenvectors of $\Sigmab^{-1}$). When the low-rank component does not have a large norm, Glasso also performs well. \new{This reaffirms the fact that Glasso can be robust enough to address the low-rank confounding with small norm.}

\subsection{The efficacy of PCA+LVGGM} \label{sec:sim-pca+lvggm}
In this section, we use multiple examples to demonstrate the efficacy of the PCA+LVGGM. We introduce corruption of the signal with two low-rank confounders. The data is generated as follows:
\begin{align*}
    {\Xb}^{(i)}_{obs}=\Xb^{(i)}+\Vb_1 {\Db_1} \Zb_{1}^{(i)}+\Vb_2 {\Db_2} \Zb_{i}^{(i)}, \ i \in \{1,\cdots,n\},
\end{align*}
where $\Xb_i \in \mathbb{R}^p$ is normally distributed with  mean zero and covariance matrix $\Sigmab$. $\Zb_{1}^{(i)} \in \mathbb{R}^{d_1}$, corresponding to the first source of low-rank confounder, has a normal distribution with mean zero and  covariance matrix $\Ib_{d_1}$. $\Vb_1 \in \mathbb{R}^{p \times d_1}$ is a non-random, semi-orthogonal matrix satisfying $\Vb_1^\tT\Vb_1 = \Ib$, and $\Db_1 \in \mathbb{R}^{d_1 \times d_1}$ is a diagonal matrix, measuring the magnitude of the first confounder. Similarly, $\Zb_{2}^{(i)} \in \mathbb{R}^{d_2}$, corresponding to the second source of low-rank confounder, has normal distribution with mean zero and  covariance matrix $\Ib_{d_2}$. $\Vb_2 \in \mathbb{R}^{p \times d_2}$ is a semi-orthogonal matrix satisfying $\Vb_2^\tT\Vb_2 = \Ib$, and $\Db_2 \in \mathbb{R}^{d_2 \times d_2}$ is a diagonal matrix, measuring the magnitude of the second low-rank confounder. Without loss of generality, we assume that $\Xb^{(i)}$, $\Zb_{1}^{(i)}$ and $\Zb_{2}^{(i)}$ are three pairwise independent vectors. Hence the observed covariance matrix is
\begin{equation}
    \label{eqn:sigma_obs_sim}
    Cov({\Xb}_{obs})={{\Sigmab}_{obs}}=\Sigmab+\Vb_1\Db_1^2\Vb_1^\tT +\Vb_2\Db_2^2\Vb_2^\tT=\Sigmab+\Lb_1+\Lb_2.
\end{equation}

Our first simulation setup (case 1) shows an ideal case for PCA+LVGGM, meaning that PCA+LVGGM method performs much better than using PCA+GGM, LVGGM, or Glasso. Let $d_1= d_2 =3$. We set $p=100$ and $n=100$. In our first example, the columns of $\Vb_1$ and $\Vb_2$ come from the eigenvectors of $\Sigmab$. We expect that PC-correction removes $\Lb_2$, so we set the diagonal elements of $\Db_2^2$ all $50$, and use the last $3$ eigenvectors of $\Sigmab$ as $\Vb_2$. This can guarantee that PC-correction performs much better than LVGGM and Glasso when removing $\Lb_2$. Then we use LVGGM to estimate $\Lb_1$, so we need a moderately large magnitude. We set all diagonal elements of $\Db_1^2$ to $20$, and use the first $3$ eigenvectors of $\Sigmab$ as $\Vb_1$. This ensures that LVGGM performs better than PC-correction and Glasso when estimating $\Lb_1$. 

Using the \texttt{sva} package, we estimate the rank of $\Lb_1+\Lb_2$ to be 6. Then we look at the eigenvalues of the observed sample covariance matrix and we can see the first $3$ eigenvalues are much larger than the $4$-th to $6$-th eigenvalues (shown in the top row of Fig.~\ref{fig:sim-pca+lvggm}). We therefore allocate $3$ to PC-correction, and $6-3=3$ to LVGGM. We also try allocating 1 to PC-correction and 5 to LVGGM. Then we compare more approaches, including using PC-correction individually by removing only 3 principal components or 6 principal components, using LVGGM with rank 6 for the low-rank component as well as the uncorrected approach Glasso. We still use 50 datasets and draw the ROC curve for the averages with varying sparsity parameters $\lambda$. The ROC for the scale-free example is in the bottom row of Fig.~\ref{fig:sim-pca+lvggm}. We also include the AUC (area under the ROC curve) for each method. We compare PCA+LVGGM with rank 3 in PC-correction with other methods. For each data set, we calculate (AUC of PCA+LVGGM)/(AUC of one other method), then compute the sample mean and sample standard deviation of that ratio over 50 data sets to compare the average performance and the variance of different methods. The results for the scale-free graph are shown in the first column of Table \ref{tab:auc}. 
We can see that PCA+LVGGM with rank 3 for PC-correction and 3 for LVGGM do perform much better than other methods for both graph structures, indicating that if the assumptions are satisfied, our method and parameter tuning procedure are useful. 

Finally, we try setups that are more similar to real world data. We still use (\ref{eqn:sigma_obs_sim}) to generate the data and set $p=100$ and $n=100$. Differently from previous settings, we now use some randomly generated eigenvectors as columns of $\Vb_1$ and $\Vb_2$. We look at the distribution of eigenvalues of gene co-expression and stock return data covariance matrices, and try to make simulation settings similar to those examples. We run two setups - the first is called a large-magnitude case (case 2), with $\Db_1^2$ a diagonal matrix with diagonal elements $(7,6,6)$ and $\Db_2^2$ a diagonal matrix with diagonal elements $(20,10,10)$. The second setup is referred to a moderately large magnitude case (case 3), in which the low-rank component has the same eigenvectors as the large-magnitude case, but the elements of $\Db_1$ and $\Db_2$ become smaller, with diagonal elements of $\Db_1^2$ $ (3,3,3)$ and diagonal elements of $\Db_2^2$ $(10,8,6)$. 

Using the \texttt{sva} package, we estimate the rank of $\Lb_1 + \Lb_2$ to be 6 for both case 2 and case 3. We observe that the first 3 eigenvalues are larger than the rest, so we allocate 3 to PC-correction and use $6-3=3$ as the rank for the low-rank component for LVGGM. We also try allocating 1 to PC-correction and $5$ to LVGGM, using PC-correction by removing only 3 PC-components and $6$ PC-components, using LVGGM with rank $6$ and using Glasso. Again, we run over 50 datasets and include ROC curves and AUC tables. 
\begin{figure}[ht!]
    \centering
    \includegraphics[width=1\textwidth]{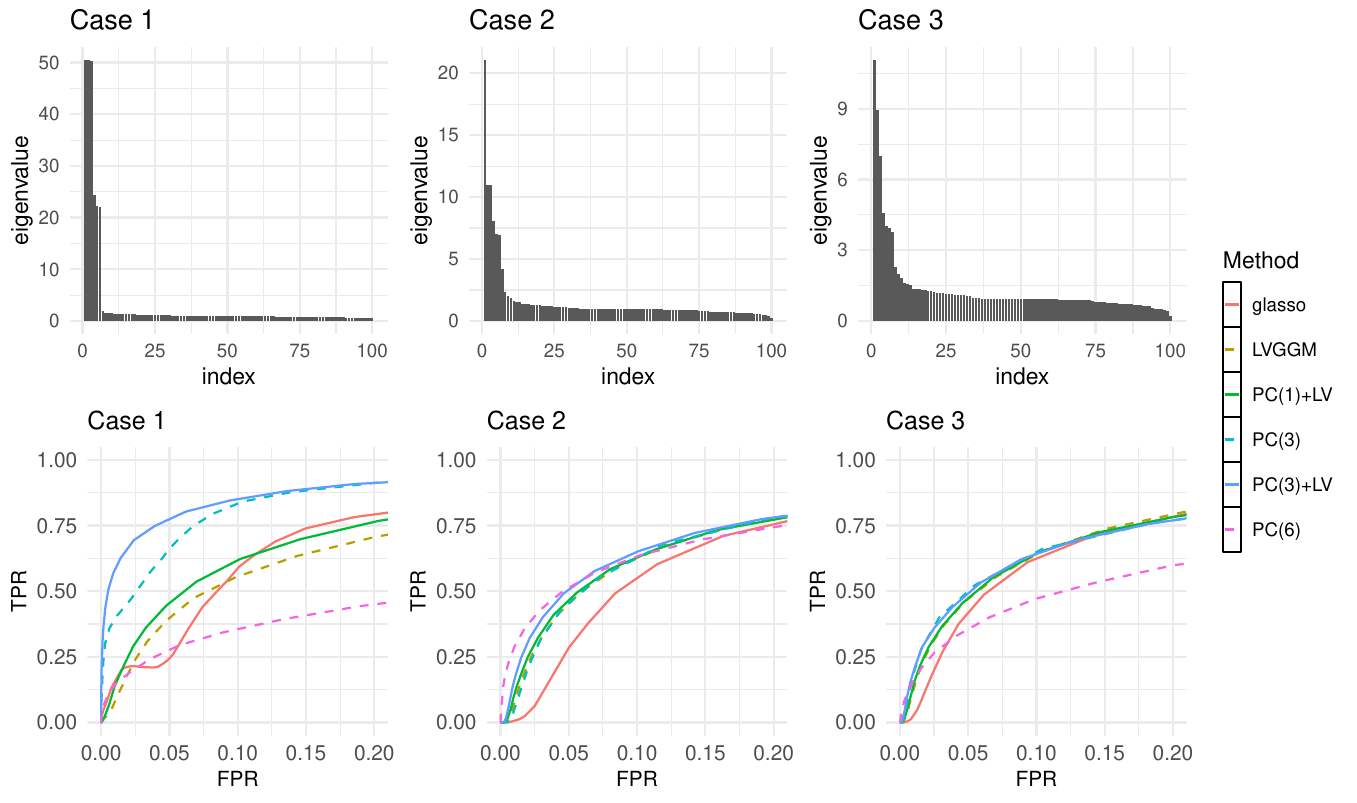}
    \caption{We use the scale-free structure when generating graphs. The first row shows the eigenvalues of ${\Sigmab}_{obs}$ under 3 setups, and the second row shows the corresponding ROC curves with different methods. PC(k) means that we use $k$ as the rank in PC-correction and PC(k)+LV means that we use PCA+LVGGM with $k$ as the rank for PC-correction.}
    \label{fig:sim-pca+lvggm}
\end{figure}

\begin{table}[ht!]

\vskip-0.3cm\hrule

\centering\small
 \begin{tabular}{c c c c} 
Method & Case 1 & Case 2 & Case 3\\ [0.7ex]
 PCA(3)+LVGGM & \textbf{1} & \textbf{1} & 1 \\
 Glasso & 1.58(0.073) & 1.25(0.080) & 1.07(0.048) \\
 LVGGM & 1.64(0.088) & 1.06(0.044) & 1.01(0.036) \\
 PCA(Full) & 2.46(0.22) & 1.01(0.12) & 1.36(0.14) \\
 PCA(3) & 1.08(0.017) & 1.05(0.027) & \textbf{0.99(0.018)} \\
 PCA(1)+LVGGM & 1.47(0.11) & 1.04(0.038) & 1.01(0.035) \\
\end{tabular}\\
\hrule
\label{tab:auc}
\caption{We use the scale-free structure when generating graphs. We compute the ratio of AUC between PCA+LVGGM with rank 3 in PC-correction and other methods, using PCA+LVGGM as the numerator. The table shows the sample mean and sample standard deviations of that ratio (in the parenthesis) over 50 data sets. In case 3, the magnitude of the confounding is not as large as other cases, so PC-correction with rank 3 has the best performance.}
\end{table}
From Fig.~\ref{fig:sim-pca+lvggm} and Table \ref{tab:auc}, we can see that other approaches considered hardly outperform PCA+LVGGM. Actually, using PCA+GGM or LVGGM can be viewed as a special case of the PCA+LVGMM methods. To see that, we can have LVGGM from PCA+LVGGM by allocating a rank of 0 to PC-correction. From the simulation and real data examples, we observe that using PCA+GGM with higher ranks often removes some useful information, resulting in more false negatives. On the other hand, if the effect of multiple confounders exists in the data that are not well represented by the first few principal components, using PCA+GGM alone might not be enough to remove the additional sources of noise. Note that LVGGM may not be enough to remove the confounding with large norm, leading to spurious connections between nodes. In this case, we would suggest PCA+LVGGM as a default setting and a starting point for problems with low-rank confounding. We can adjust different rank allocations based on the specific problems and goals of interest.

\section{Applications} \label{sec:application}

\subsection{Gene co-expression networks} \label{sec:gene}
Our first application is to reanalyze the gene co-expression networks originally analyzed by \cite{parsana2019addressing}. The goal of gene co-expression network analysis is to identify transcriptional patterns indicating functional and regulatory relationships between genes. In biology, it is of great interest to infer the structure of these networks; however, the construction of such networks from data is challenging, since the data is usually corrupted by technical and unwanted biological variability known to confound expression data. The influence of such artifacts can often introduce spurious correlation between genes; if we apply sparse precision matrix inference directly without addressing confounding, we may obtain a graph including many false positive edges. \cite{parsana2019addressing} uses PCA+GGM to estimate this network and shows that PC-correction can be an effective way to control the false discovery rate of the network. In practice, however, some effects of confounding may not be represented in the top few principal components. This motivates the more flexible PCA+LVGGM approach. The PC-correction effectively removes high variance confounding, and then LVGGM subsequently accounts for any remaining low-rank confounding. We consider gene expression data from $3$ diverse tissues: blood, lung and tibial nerve, with sample sizes between 300 to 400 each. 1000 genes are chosen from each tissue. More detail about the source of the data and pre-processing steps are introduced in Appendix \ref{app-gene}.

\begin{figure}[ht!]
    \centering
    \includegraphics[width=1\textwidth]{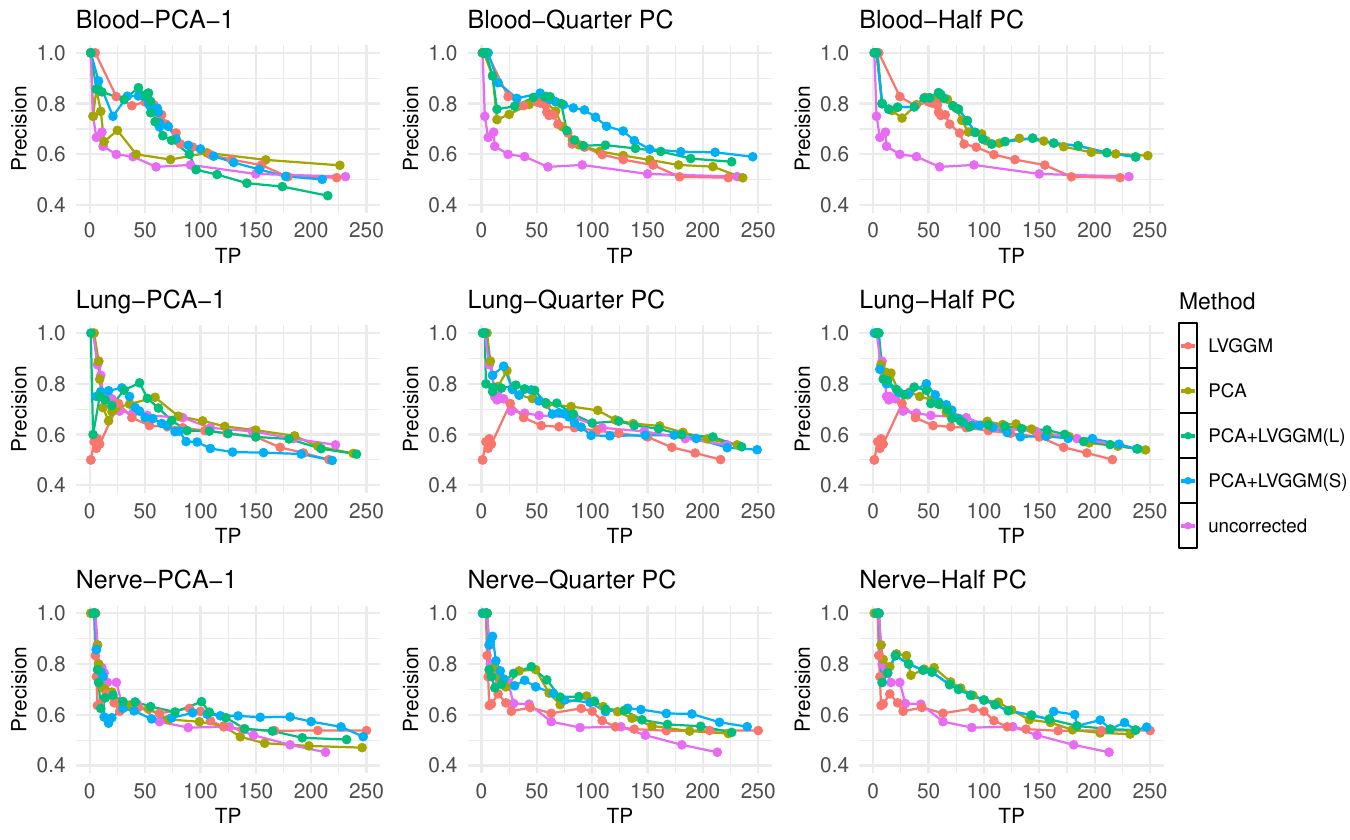}
    \caption{Precision-recall plots for gene expression data. TP represents number of true positives. PCA+LVGGM(L) means larger $\gamma$ in LVGGM and PCA+LVGGM(S) means small $\gamma$ in LVGGM. We can see that PCA+LVGGM performs the best or equivalently well compared to other approaches for almost all 3 tissues.}
    \label{fig:gene_1}
\end{figure}
We observe that all of the sample covariance matrices are approximately low-rank by looking at the eigenvalues of the covariance matrices of genes, indicating the potential existence of high variance confounding. 
Then we use \texttt{sva} package to estimate the rank for PC-correction and call this the full \texttt{sva} rank correction. \cite{parsana2019addressing} suggests that the rank estimated by \texttt{sva} might be so large that some useful network signal is removed. To reduce the effect of over-correction, we apply the PC-correction with half and one quarter of the \texttt{sva} rank, which we refer to as half \texttt{sva} rank correction and quarter \texttt{sva} rank correction, respectively. For many tissues, the first eigenvalue is much larger than the rest, this motivates us to try rank-1  PC-correction. We include the results with half \texttt{sva} rank, quarter \texttt{sva} rank and rank 1 PC-corrections in Fig.~\ref{fig:gene_1}. After running the above PC-corrections to remove high-variance confounding, we run LVGGM as an additional step to further estimate and remove the low-rank noise with moderate variance. We use two different values as the $\gamma$ parameters in LVGGM. Larger $\gamma$ leads to removing lower-rank confounding and smaller $\gamma$ leads to remove higher-rank confounding. We show the results for both choices of $\gamma$. We use different $\lambda$ to control sparsity of the estimated graph. {\new{Specifically, following \cite{parsana2019addressing}, we use 50 values of $\lambda$ between 0.3 and 1.}} We draw Fig.~\ref{fig:gene_1} similar to the precision recall plot. The y-axis represents the precision (True Positives/(True Positives + False Positives)), and the x-axis is the number of true positives. We can see that PCA+LVGGM can yield better or equivalently good results compared to other methods, indicating that it can be useful to run LVGGM after the PC-correction when estimating gene co-expression networks.

\subsection{Stock return data} \label{sec:stock}

In finance, the Capital Asset Pricing Model (CAPM) states that there is a widespread market factor which dominates the movement of all stock prices. Empirical evidence for the market trend can be found in the first principal component of the stock data, which is dense and has approximately equal loadings across all stocks (Fig.~\ref{fig:stocklowrank}, left).  In fact, the first few eigenvalues of the stock correlation matrix are significantly larger than the rest \cite{fama2004capital}, which suggests that only a few latent factors are \new{mainly} driving stock correlations.

In this section, we posit that the conditional dependence structure after accounting for these latent effects is more likely to reflect direct relationships between companies aside from the market and, perhaps, sector trends. Our interest is in recovering the undirected graphical model (conditional dependence) structure between stock returns after controlling for potential low rank confounders.


We compare networks inferred by PCA+LVGGM, PCA+GGM, LVGGM and Glasso by analyzing monthly returns of component stocks in S\&P $100$ index between 2008 and 2019 \cite{hayden2015canonical}. The 49 chosen companies are in 6 sectors: technology (10 companies), finance (11), energy (7), health (8), capital goods (7) and non-cyclical stocks (6). For PCA+GGM, we remove the first eigenvector which corresponds to the overall market trend.  For the other latent variable methods we use the \texttt{sva} package to identify a plausible rank. For PCA+LVGGM, we remove the first principal component corresponding to the overall market trend and use LVGGM to estimate remaining latent confounders and the graph.  Fig.~\ref{fig:stock-1} shows the networks obtained by each approach.

\begin{figure}[t]
\begin{minipage}[b]{0.5\linewidth}
  \centering
  \centerline{\includegraphics[width=5.75cm]{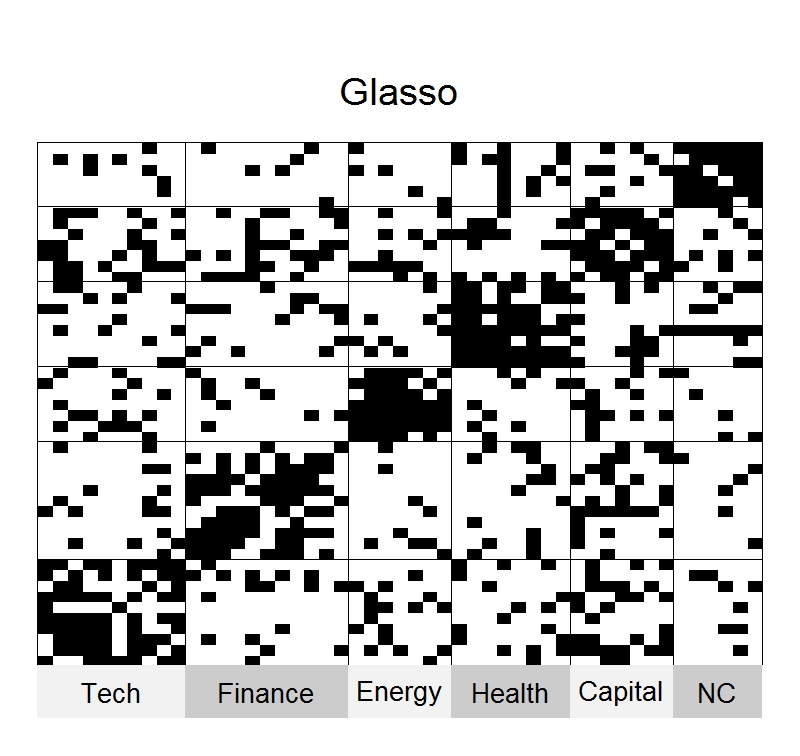}}
\end{minipage}
\hfill
\begin{minipage}[b]{0.5\linewidth}
  \centering
  \centerline{\includegraphics[width=5.75cm]{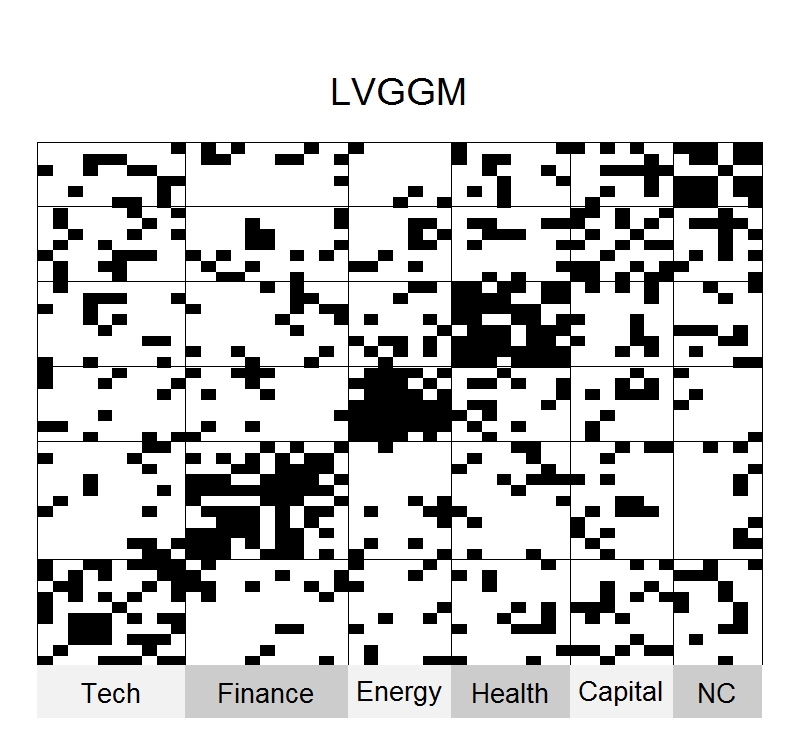}}
\end{minipage}
\begin{minipage}[b]{0.5\linewidth}
  \centering
  \centerline{\includegraphics[width=5.75cm]{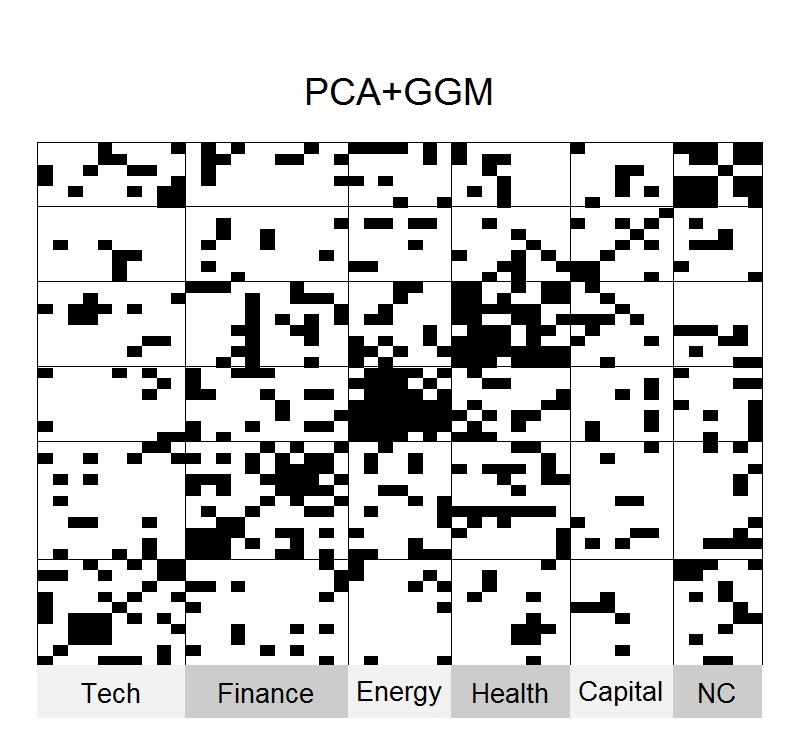}}
\end{minipage}
\hfill
\begin{minipage}[b]{0.5\linewidth}
  \centering
  \centerline{\includegraphics[width=5.75cm]{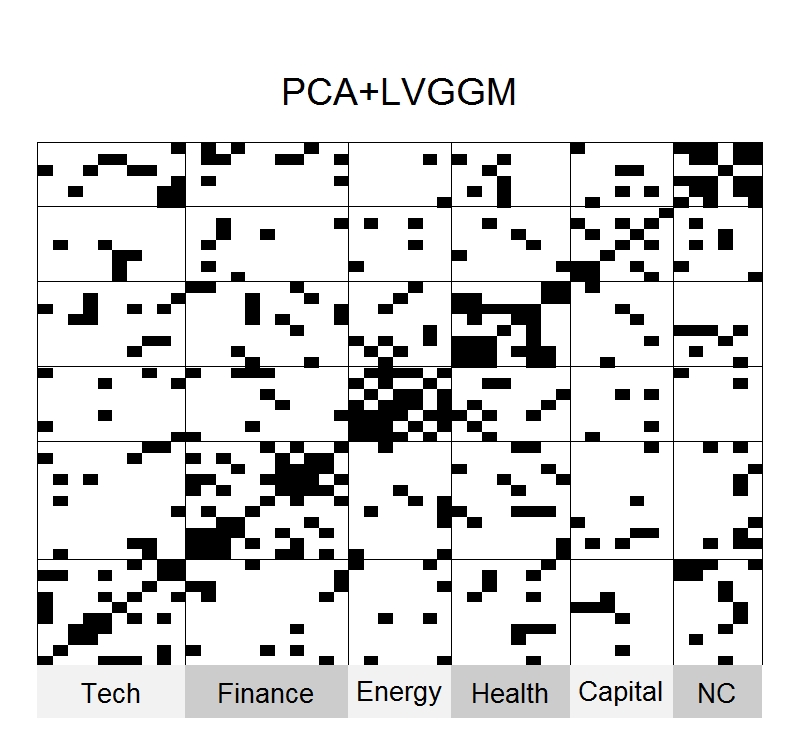}}
\end{minipage}
    \caption{Stock connections between 2008 and 2019 learned by different methods. The following sectors are included: Tech, Finance, Energy, Health, Capital goods and Non-cyclical (NC) from left to right.}
    \label{fig:stock-1}
\end{figure}


For each method, the sparsity-inducing tuning parameter was chosen to minimize negative log-likelihood using a 6-fold cross-validation procedure, and the number of low rank components are chosen manually.  \new{Specifically, in cross-validation, we use negative log-likelihood to measure the out-of-sample error and choose the parameters that minimize the average out-of sample error over 6 validation sets. }
We observe that when using LVGGM, allocating rank 1 or 2 to the low-rank component won't make the estimates very different from Glasso, while allocating ranks higher than 6 to the low-rank component leads to higher out-of-sample error, so 5, the rank picked by sva, is among the best choices. For PCA-based methods, removing more than 1 principal components leads to higher out-of-sample error. As expected, the Glasso result is denser than the networks learned with sparse plus low rank methodology with PCA+LVGGM yielding the sparsest network.


For LVGGM, we note that the method effectively controls for sector effect but is less effective in controlling for the effect of the overall market trend.  
Let $\hat{\Sigmab}_{obs}$ be the empirical observed covariance matrix and $\hat{\vb}_i$ be its $i$-th eigenvector. We have the following observations: first,  the first principal component is closely aligned with the overall market trend, because the absolute value of the inner product between the first eigenvector of $\hat{\Sigmab}_{obs}$ and the normalized ``all ones'' vector is 0.98. Second, the observed empirical covariance matrix has an approximately low-rank structure, because the first eigenvalue of $\hat{\Sigmab}_{obs}$ is $18.25$ and the second is $3.5$ and all other eigenvalues are close or smaller than 1. Third, LVGGM does not capture the full effect of the market trend. Let ${\hat{\Lb}_{\Omegab}}^{'}$ be the estimate of ${{\Lb}_{\Omegab}}^{'}$ in \eqref{eqn:covpluslomega_prime}. When we apply LVGGM on $\hat{\Sigmab}_{obs}$, the inner product between the first eigenvector of ${\hat{\Lb}_{\Omegab}}^{'}$ and $\hat{\vb}_1$ is close to $1$ but the first eigenvalue of ${\hat{\Lb}_{\Omegab}}^{'}$ is only $0.55$, much smaller than the first eigenvalue of $\hat{\Sigmab}_{obs}$.

We argue that PCA+LVGGM is the most appropriate method for this application because it appropriately controls for both market and sector effects. Let $\hat{\Lb}_{\Sigmab}$ and ${\hat{\Lb}_{\Omegab}}^{'}$ be the estimates of the low-rank components defined in \eqref{eqn:sum-of-low-rank-components}. For PCA+LVGGM, we remove ${\Lb}_{\Sigmab}$ by removing the first eigencomponent of ${\Sigmab}_{obs}$, then run LVGGM to estimate ${{\Lb}_{\Omegab}}$ and $\Omegab$. We claim that PCA+LVGGM can remove the confounding effect fully in the market trend direction, as well as the remaining confounding effect in other directions. To see that, first, $\hat{\vb}_1$ is removed in PC-correction. Second, the inner product between the first eigenvector of ${\hat{\Lb}_{\Omegab}}^{'}$ and $\hat{\vb}_2$, the second eigenvector of $\hat{\Sigmab}_{obs}$, is $0.99$. The first eigenvalue of ${\hat{\Lb}_{\Omegab}}^{'}$ is $0.4$ and the second eigenvalue of $\hat{\Sigmab}_{obs}$ is $3.5$. This shows that when applying LVGGM, only part of the information in the direction of $\hat{\vb}_2$ has been removed. We know that the direction of $\hat{\vb}_1$ reflects the market trend, but $\hat{\vb}_2$ might include both true graph information and some latent confounding effect, hence using LVGGM might be a good choice for capturing the confounding effect in the direction of $\hat{\vb}_2$. Overall PCA+LVGGM, therefore, might be a better choice than LVGGM and the PCA-based method.

Fig.~\ref{fig:stocklowrank} shows heat maps of $\hat{\Lb}_{\Sigmab}$ obtained via PCA and $\hat{\Lb}_{\Omegab}^{'}$ obtained with LVGGM (rank $5$). As expected, the elements of $\hat{\Lb}_{\Sigmab}$ are roughly equal in magnitude, reflect the market trend and the large first eigenvalue of $\hat{\Sigmab}_{obs}$. In contrast, $\hat{\Lb}_{\Omegab}^{'}$ shows a block-diagonal structure and its elements have smaller magnitudes, which suggests that LVGGM does not adequately account for the overall the market trend. On the other hand, the block diagonal structure of $\hat{\Lb}_{\Omegab}^{'}$  reflects inferred sector effects. PCA+GGM is most effective at reducing confounding from overall market trends and LVGGM is more effective at accounting for remaining confounding, such as the sector effect. Therefore, PCA+LVGGM, which combines the benefits of PCA and LVGGM is arguably the most appropriate choice for addressing the latent confounding in this context.





\begin{figure}[t]
\begin{minipage}[b]{0.5\linewidth}
  \centering
  \centerline{\includegraphics[width=7cm]{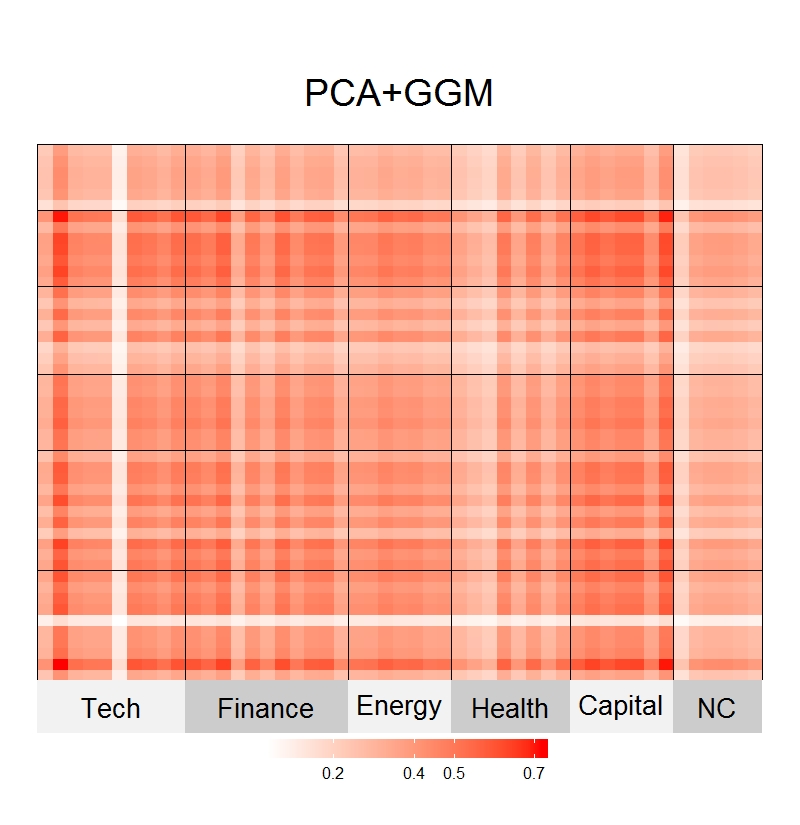}}
 \vspace{0.05cm}
  \centerline{(a)}\medskip
\end{minipage}
\begin{minipage}[b]{0.5\linewidth}
  \centering
  \centerline{\includegraphics[width=7cm]{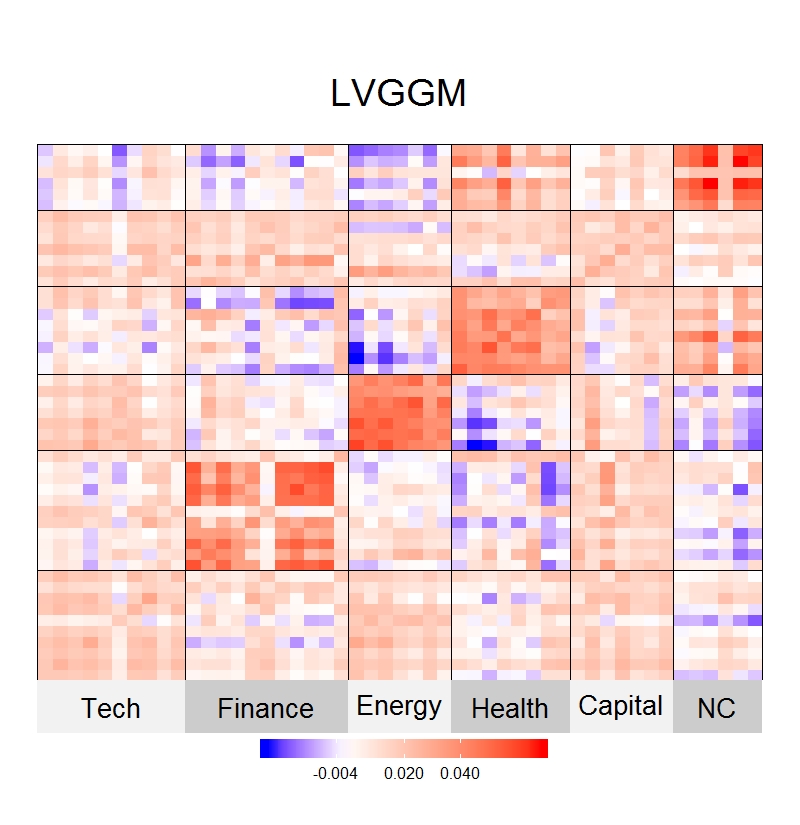}}
 \vspace{0.05cm}
  \centerline{(b) }\medskip
\end{minipage}
\caption{(a) Rank one approximation to $\hat{\Lb}_{\Sigmab}$ obtained with PCA. (b) $\hat{\Lb}_{\Omegab}^{'}$ obtained with LVGGM with rank $5$.  The rank-one approximation to $\hat{\Lb}_{\Sigmab}$ is close to a constant matrix.  In contrast, $\hat{\Lb}_{\Omegab}^{'}$ reflects sector effects but does not reflect the strong effect due to overall market trends.}
\label{fig:stocklowrank}
\end{figure}

\section{Conclusion} \label{sec6}
We have studied the problem of estimating the graph structure under Gaussian graphical models when the data is corrupted by latent confounders. We compare two popular methods PCA+GGM and LVGGM. One of our contributions is to show the connection and difference between these two approaches, both theoretically and empirically. Based on that, we propose a new method, PCA+LVGGM. The effectiveness of this method is supported by theoretical analysis, simulations and applications. Actually, our analysis provides guidance on when to use which approach to estimate GMM in the presence of latent confounders. We believe that this guidance can help researchers in many fields, such as finance and biology, when there exist problems of graph estimation with confounders.

There are several future directions. First, we can extend the current framework to other distributions, such as the transelliptical distributions \cite{liu2012transelliptical} and the Ising model \cite{ravikumar2010high,nussbaum2019ising}. Secondly, we can consider more structures - for example, we are interested in what will happen if the principal components of the observed data are sparse. \new{Given the distributions of eigenvalues or eigenvectors of $\Sigmab$, we can provide more precise bounds. }\newup{Deriving sharper bounds to more general settings is a useful extension. }\new{We can also extend LVGGM based on our observations in this work. For instance, it is interesting to consider replacing the nuclear norm penalty in LVGGM with some unbiased regularization such as SCAD \cite{fan2001variable} and MCP \cite{zhang2010nearly} to avoid shrinking too much on large eigenvalues.} Other future works include applying our current methods and analysis in more applications, such as the functional magnetic resonance imaging (fMRI) in neuroscience.

\section*{Acknowledgments}
The authors thank Christos Thrampoulidis and Megan Elcheikhali for the helpful discussion. The authors thank anonymous reviewers for their valuable feedback.


\begin{appendices}\label{app}
\section{Proofs}\label{appl}

\begin{proof}[\textbf{\upshape Proof outline:}]\label{pf-outline}
To prove Theorems \ref{thm1} and \ref{thm1new}, we first write $\hat{\Sigmab}-\Sigmab$ as below:
\begin{align}
    \hat{\Sigmab}-\Sigmab &= (\hat{{\Sigmab}}_{obs}-\hat{\lambda}_1 \hat{\thetab}_1 \hat{\thetab}_1^\tT) - ({\Sigmab}_{obs} - \lambda_1 \thetab_1 \thetab_1^\tT +  \lambda_k \thetab_1 \thetab_1^\tT) = (\hat{{\Sigmab}}_{obs}-{\Sigmab}_{obs})+(\lambda_1 \thetab_1 \thetab_1^\tT - \hat{\lambda}_1 \hat{\thetab}_1 \hat{\thetab}_1^\tT) - \lambda_k \thetab_1 \thetab_1^\tT, \label{pf-decomp}
\end{align}
where $\lambda_k$ is the $k$-th eigenvalue of ${\Sigmab}_{obs}$, and $\thetab_k$ is the $k$-th eigenvector of ${\Sigmab}_{obs}$. To bound $\hat{\Sigmab}-\Sigmab$, we need to bound the norms of ${\Sigmab}_{obs}-\hat{{\Sigmab}}_{obs}$, $\lambda_1-\hat{\lambda}_1$ and $\thetab_1-\hat{\thetab}_1$, and the lemmas below show these bounds.
\end{proof}
\begin{lemma}
\label{lm1}
Assuming that ${\Sigmab}^{-1}_{obs}$ satisfies the maximum and minimum eigenvalue condition in (\ref{eqn:sigmaeig_proof}), then
\begin{align*}
    P(\Vert \hat{{\Sigmab}}_{obs}-{\Sigmab}_{obs}\Vert_\infty \ge t) \le C_1 p^{-1}, \ \ \ t=C_2\sqrt{\frac{\lnn p}{n}},
\end{align*}
where $C_1$ and $C_2$ depends on the eigenvalue bound $M$ in \eqref{eqn:sigmaeig_proof}.
\end{lemma}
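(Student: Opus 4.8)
The plan is to follow the classical entrywise-plus-union-bound argument of \citet{bickel2008regularized}. The crucial structural fact is that, since $\Xb^{(i)}$ and $\Zb^{(i)}$ are independent Gaussians, the observed vector $\Xb^{(i)}_{obs} = \Xb^{(i)} + \sigma \vb \Zb^{(i)}$ is itself a mean-zero Gaussian vector with covariance $\Sigmab_{obs}$, and by hypothesis (\ref{eqn:sigmaeig_proof}) its spectrum lies in $[1/M, M]$. This places us exactly in the regime covered by concentration bounds for sample covariances of Gaussian data with bounded eigenvalues. So the first step is simply to record that $\hat{\Sigmab}_{obs} = \frac1n\sum_{k=1}^n \Xb^{(k)}_{obs}(\Xb^{(k)}_{obs})^T$ is an average of i.i.d.\ rank-one Gaussian outer products with mean $\Sigmab_{obs}$, and to note that it suffices to control the deviation entry by entry and then take a union bound.

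Next I would bound each entry separately. Fix a pair $(i,j)$; then $(\hat{\Sigmab}_{obs})_{ij} - (\Sigmab_{obs})_{ij} = \frac1n \sum_{k=1}^n \big( X^{(k)}_{obs,i} X^{(k)}_{obs,j} - \mathbb{E}[X_{obs,i}X_{obs,j}]\big)$ is an average of $n$ i.i.d.\ centered random variables, each the product of two jointly Gaussian coordinates with variance at most $M$; such a variable is sub-exponential with Orlicz parameters controlled by $M$. A Bernstein-type tail inequality for sub-exponential averages (the entrywise bound used by \citet{bickel2008regularized}) then gives, for every $\nu$ below a threshold $\delta = \delta(M)$,
\begin{align*}
P\big( |(\hat{\Sigmab}_{obs})_{ij} - (\Sigmab_{obs})_{ij}| \ge \nu \big) \le C_1 \exp(-C_2 n \nu^2),
\end{align*}
with $C_1, C_2$ depending only on $M$.

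Finally I would take $\nu = t = 3 C_2'\sqrt{\log p / n}$, with the constant tuned so that $C_2 n t^2 \ge 3\log p$, which makes the per-entry probability at most $C_1 p^{-3}$; a union bound over the (at most) $p^2$ entries yields the stated $p^2 C_1 p^{-3}$. The only point needing care is verifying that this choice of $t$ stays below the validity threshold $\delta(M)$ of the Bernstein bound — this is where the implicit requirement $n \gtrsim \log p$ enters, and it is automatic in the regime $n \ge p$ assumed in Theorem \ref{thm1}. I do not expect a genuine obstacle: the work is essentially bookkeeping, namely tracking how $C_1$ and $C_2$ inherit their dependence on $M$ from the sub-exponential norm estimates and making sure the numerical constant in $t$ is consistent with the exponent needed for the union bound.
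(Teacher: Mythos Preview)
Your approach is exactly the paper's: the proof in the paper is a one-line reference to \citet[Lemma A.3]{bickel2008regularized}, which is precisely the entrywise sub-exponential bound plus union bound you outline. The structure, constants, and use of the eigenvalue bound $M$ all match.

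One point worth flagging explicitly: your argument bounds $\Vert \hat{\Sigmab}_{obs}-\Sigmab_{obs}\Vert_{\infty}$, not the spectral norm $\Vert\cdot\Vert_2$ as written in the statement. This is almost certainly a typo in the lemma statement rather than a gap in your proof: the $p^2$ factor in the probability bound is the signature of a union bound over entries, the cited Bickel--Levina lemma is an $\ell_\infty$ bound, and when the lemma is invoked in the proof of Theorem~\ref{thm1} it is used to control $\Vert \Sigmab_{obs}-\hat{\Sigmab}_{obs}\Vert_{\infty}$. A genuine spectral-norm bound at rate $\sqrt{\log p/n}$ would not follow from this argument (and indeed Lemma~\ref{lm2} handles the spectral norm separately at the expected $\sqrt{p/n}$ rate). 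You should state clearly that you are proving the $\ell_\infty$ version, which is what is actually needed downstream.
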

\begin{proof}[\textbf{\upshape Proof:}] 
The proof follows \cite[Lemma A.3]{bickel2008regularized}.
\end{proof}

\new{The next two lemmas provide bounds for $|\lambda_1-\hat{\lambda}_1|$.}

\begin{lemma}
\label{lm2}
Under the assumptions of Theorem \ref{thm1}, we have
\begin{align*}
    |\lambda_1 - \hat{\lambda}_1| \le C_1\lambda_1 \sqrt{\frac{p}{n}},
\end{align*}
with probability at least $1-2e^{-p/C_2}$ for some constants $C_i$'s $> 1$.
\end{lemma}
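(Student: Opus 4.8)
The plan is to reduce the perturbation of the top eigenvalue to the standard nonasymptotic control of the extreme singular values of a Gaussian random matrix, so that essentially no new probabilistic argument is required. The first step is Weyl's inequality: since $\lambda_1 = \lambda_1(\Sigmab_{obs})$ and $\hat\lambda_1 = \lambda_1(\hat\Sigmab_{obs})$, we have
\[ |\lambda_1 - \hat\lambda_1| \;\le\; \|\hat\Sigmab_{obs} - \Sigmab_{obs}\|_2 , \]
so it suffices to bound the operator norm of the sample covariance deviation.

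For the second step I would whiten the sample. Since the $\Xb^{(i)}_{obs}$ are i.i.d.\ $N(\boldsymbol{0},\Sigmab_{obs})$, we can write $\hat\Sigmab_{obs} = \tfrac1n\,\Sigmab_{obs}^{1/2}\boldsymbol{G}^T\boldsymbol{G}\,\Sigmab_{obs}^{1/2}$, where $\boldsymbol{G}\in\mathbb{R}^{n\times p}$ has i.i.d.\ $N(0,1)$ entries (take $\boldsymbol{G}$ to have rows $(\Sigmab_{obs}^{-1/2}\Xb^{(i)}_{obs})^T$). Writing $\Sigmab_{obs} = \Sigmab_{obs}^{1/2}\Ib_p\Sigmab_{obs}^{1/2}$ and using submultiplicativity of the spectral norm,
\[ \|\hat\Sigmab_{obs} - \Sigmab_{obs}\|_2 \;\le\; \big\|\Sigmab_{obs}^{1/2}\big\|_2^2\,\big\|\tfrac1n\boldsymbol{G}^T\boldsymbol{G} - \Ib_p\big\|_2 \;=\; \lambda_1\,\big\|\tfrac1n\boldsymbol{G}^T\boldsymbol{G} - \Ib_p\big\|_2 . \]
Now invoke the standard bound on the extreme singular values of a Gaussian matrix \citep{wainwright2019high}: for any $t\ge 0$, with probability at least $1 - 2e^{-t^2/2}$,
\[ \sqrt n - \sqrt p - t \;\le\; \sigma_{\min}(\boldsymbol{G}) \;\le\; \sigma_{\max}(\boldsymbol{G}) \;\le\; \sqrt n + \sqrt p + t . \]
Taking $t = \delta_1\sqrt n$ and writing $a = \sqrt{p/n} + \delta_1$ (which the hypotheses keep below $1$), this gives $\sqrt n\,(1-a) \le \sigma_{\min}(\boldsymbol{G}) \le \sigma_{\max}(\boldsymbol{G}) \le \sqrt n\,(1+a)$, hence $\big\|\tfrac1n\boldsymbol{G}^T\boldsymbol{G} - \Ib_p\big\|_2 \le \max\{(1+a)^2-1,\,1-(1-a)^2\} = 2a + a^2$ on an event of probability at least $1 - 2e^{-n\delta_1^2/2}$.

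Chaining the displays gives $|\lambda_1 - \hat\lambda_1| \le \lambda_1(2a + a^2)$ on that event, and absorbing the higher-order term (legitimate under the smallness conditions inherited from Theorem \ref{thm1}, which force $\sqrt{p/n}$ and $\delta_1$ to be small) yields the claimed $|\lambda_1 - \hat\lambda_1| \le 2\lambda_1(\sqrt{p/n} + \delta_1)$ with probability at least $1 - 2e^{-n\delta_1^2/2}$. I do not expect any genuine obstacle here: the only non-elementary ingredient is the Gaussian singular-value concentration, which I would cite rather than reprove, and which returns exactly the probability $1 - 2e^{-n\delta_1^2/2}$ for the choice $t = \delta_1\sqrt n$. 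The one point requiring care is the bookkeeping of the quadratic term $a^2$; if one wants the clean constant $2$ one either leans on the smallness of $\sqrt{p/n}+\delta_1$, or sandwiches $\hat\lambda_1$ directly --- from above by $\hat\lambda_1 \le \lambda_1\,\tfrac1n\sigma_{\max}(\boldsymbol{G})^2$ and from below by $\hat\lambda_1 \ge \thetab_1^T\hat\Sigmab_{obs}\thetab_1 = \lambda_1\,\tfrac1n\|\boldsymbol{G}\thetab_1\|_2^2$ with $\|\boldsymbol{G}\thetab_1\|_2^2\sim\chi^2_n$ --- which again gives two-sided control of $\hat\lambda_1/\lambda_1$ with the same tail.
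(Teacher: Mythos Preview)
Your proposal is correct and follows essentially the same route as the paper: Weyl's inequality to reduce $|\lambda_1-\hat\lambda_1|$ to $\|\hat\Sigmab_{obs}-\Sigmab_{obs}\|_2$, followed by the standard spectral-norm deviation bound for a Gaussian sample covariance (the paper simply cites \citet[Theorem~6.5]{wainwright2019high}, while you spell out the whitening step and invoke the Davidson--Szarek singular-value bound that underlies it). Your explicit bookkeeping of the $a^2$ term and the alternative one-sided sandwiching via $\thetab_1^T\hat\Sigmab_{obs}\thetab_1$ are nice touches the paper omits, but the argument is the same.
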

\begin{proof}[\textbf{\upshape Proof:}] 
By Weyl's lemma \cite{horn2012matrix}
\begin{align*}
    \max_{j=1,...,p}|\lambda_j({\Sigmab}_{obs})-\lambda_j(\hat{{\Sigmab}}_{obs})| \le \Vert \hat{\Sigmab}_{obs}-\Sigmab_{obs} \Vert_2.
\end{align*}
The bound on $ \Vert \hat{\Sigmab}_{obs}-\Sigmab_{obs} \Vert_2$ is then obtained from \cite[Theorem 6.5]{wainwright2019high}.
\end{proof}
\new{Following \cite[Theorem 5]{koltchinskii2017concentration}, a tighter bound can be obtained using the effective rank $r(\Sigmab_{obs})$ defined in \eqref{eqn:eff_rank}. }
\new{\begin{lemma}
\label{lm2new}
Under the assumptions of Theorem \ref{thm1new}, we have
\begin{align*}
    \newup{|\lambda_1 - \hat{\lambda}_1| \le \Vert \hat{\Sigmab}_{obs}-\Sigmab_{obs} \Vert_2 \le C_1(\sqrt{\frac{p\lambda_1}{n}}\vee\frac{p}{n}),}
\end{align*}
with probability at least $1-e^{-p/C_2}$ for some constants $C_i$'s $> 1$.
\end{lemma} }
\begin{proof}[\textbf{\upshape Proof:}] 
\newup{Theorem 5 in \cite{koltchinskii2017concentration} shows that
\begin{align*}
    |\lambda_1 - \hat{\lambda}_1| \le \Vert \hat{\Sigmab}_{obs}-\Sigmab_{obs} \Vert_2 \le C\lambda_1 (\sqrt{\frac{ r(\Sigmaobs)}{n}}\vee\frac{r(\Sigmaobs)}{n}).
\end{align*}
The bound above is at the same order as $\sqrt{{(p\lambda_1)}/{n}}\vee({p}/{n})$ because $r(\Sigmaobs) \asymp ({p}/{\lambda_1})$.}
\end{proof}

\new{Then the two lemmas below provide bounds for $\| \thetab_1-\hat{\thetab}_1 \|_\infty$.}

\begin{lemma}[adapted from Wainwright 2019, Corollary 8.7]
\label{lm3}
Under the assumptions of Theorem \ref{thm1}, suppose $n \ge p$ and $\Vert \Sigmab \Vert_2 \sqrt{{(\nu + 1)}/{\nu^2}}\sqrt{{{p}/{n}}}\le {1}/{128}$, then
\begin{align*}
    \Vert \thetab_1 - \hat{\thetab}_1 \Vert_2 \le C_1 \sqrt{\frac{\nu + 1}{\nu^2}}\sqrt{\frac{p}{n}},
\end{align*}
with probability at least $1- C_2e^{-p/C_3}$ for some $\Cis$, where $\nu=\lambda_1({\Sigmab}_{obs})-\lambda_2({\Sigmab}_{obs})$ .
\end{lemma}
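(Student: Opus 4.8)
The plan is to view $\hat{\thetab}_1$ as a perturbation of $\thetab_1$ under the additive error $E := \hat{\Sigmab}_{obs} - \Sigmab_{obs}$, and to combine a Davis--Kahan $\sin\Theta$ bound with an operator-norm concentration inequality for $E$ that is sensitive to the spiked structure of $\Sigmab_{obs}$. Since $\sigma^2 > \lambda_1(\Sigmab)$, the spike $(\lambda_k(\Sigmab)+\sigma^2)\vb\vb^T$ in \eqref{eqn:sigmaobs_proof} dominates the bulk $\Sigmab_{-k}$, so $\thetab_1=\vb$, $\lambda_1(\Sigmab_{obs})=\lambda_k(\Sigmab)+\sigma^2$, and $\lambda_2(\Sigmab_{obs})=\lambda_1(\Sigmab_{-k})\le\lambda_1(\Sigmab)$; hence the gap $\nu$ is strictly positive and $\sigma^2\le\lambda_1(\Sigmab_{obs})\le\nu+\lambda_1(\Sigmab)$.

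First I would apply a $\sin\Theta$ bound for leading eigenvectors (Davis--Kahan, in the form discussed in \citet[Chapter~8]{wainwright2019high}): after choosing the sign of $\hat{\thetab}_1$ so that $\thetab_1^T\hat{\thetab}_1\ge 0$,
\begin{align*}
\Vert \thetab_1 - \hat{\thetab}_1 \Vert_2 \le \frac{c\,\Vert E\Vert_2}{\nu}\qquad\text{on the event } \{\Vert E\Vert_2 \le \nu/2\}.
\end{align*}
Next I would control $\Vert E\Vert_2$ with high probability. Writing $\Xb^{(i)}_{obs}=\Xb^{(i)}+\sigma Z^{(i)}\vb$ and expanding the sample covariance gives
\begin{align*}
E = (\hat{\Sigmab}_X - \Sigmab) + \sigma\big(\vb\,\bar{\wb}^T + \bar{\wb}\,\vb^T\big) + \sigma^2\Big(\tfrac1n\sum_{i=1}^n (Z^{(i)})^2 - 1\Big)\vb\vb^T,
\end{align*}
where $\hat{\Sigmab}_X=\frac1n\sum_i\Xb^{(i)}(\Xb^{(i)})^T$ and $\bar{\wb}=\frac1n\sum_i Z^{(i)}\Xb^{(i)}$. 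The first term is $O\big(\Vert\Sigmab\Vert_2(\sqrt{p/n}+\delta)\big)$ by the Gaussian sample-covariance operator-norm bound (\citet[Theorem~6.5]{wainwright2019high}, exactly as in Lemma~\ref{lm2}); the third term has norm $O\big(\sigma^2(\sqrt{1/n}+\delta')\big)$ by $\chi^2$ concentration; and the cross term has norm at most $2\sigma\Vert\bar{\wb}\Vert_2$, where $\bar{\wb}$ averages independent products of a standard normal and an independent $N(\boldsymbol{0},\Sigmab)$ vector, so $\Vert\bar{\wb}\Vert_2=O\big(\sqrt{\mathrm{tr}(\Sigmab)/n}+\delta''\big)$ by a sub-exponential (Hanson--Wright) tail bound. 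Using $\sigma^2\le\nu+\lambda_1(\Sigmab)$ and $\mathrm{tr}(\Sigmab)=O(p)$, these combine to
\begin{align*}
\Vert E\Vert_2 \le C\,\Vert\Sigmab\Vert_2\sqrt{(\nu+1)\tfrac{p}{n}} + \nu\,\delta_2
\end{align*}
with high probability, for an absolute constant $C$. The hypothesis $\Vert\Sigmab\Vert_2\sqrt{\tfrac{\nu+1}{\nu^2}}\sqrt{p/n}\le\tfrac1{128}$ is then precisely what forces $\Vert E\Vert_2\le\nu/2$, so the $\sin\Theta$ bound applies; dividing the displayed bound on $\Vert E\Vert_2$ by $\nu$ yields the claimed $C_3\sqrt{\tfrac{\nu+1}{\nu^2}}\sqrt{p/n}+\delta_2$, and the probability $1-C_4e^{-C_5 n\min\{\sqrt{\nu}\,\delta_2,\ \nu\delta_2^2\}}$ records the sub-Gaussian ($\nu\delta_2^2$) and sub-exponential ($\sqrt{\nu}\,\delta_2$) tail regimes produced by the cross and spike deviations after rescaling by $1/\nu$.

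I expect the main obstacle to be extracting the \emph{right} dependence on $\nu$ in $\Vert E\Vert_2$. A crude bound $\Vert E\Vert_2=O(\Vert\Sigmab_{obs}\Vert_2\sqrt{p/n})=O(\nu\sqrt{p/n})$ would collapse after dividing by $\nu$ to $O(\sqrt{p/n})$, discarding the benefit of a large eigengap that underlies the ``effective dimension'' discussion in Section~\ref{sec:analysis_thm}. One must therefore either keep the explicit three-term decomposition above, so that the spike scale $\sigma^2$ enters the dominant (cross) term only through $\sigma\asymp\sqrt{\nu}$, or invoke a sharper effective-rank operator-norm inequality $\Vert E\Vert_2\lesssim\sqrt{\Vert\Sigmab_{obs}\Vert_2\,\mathrm{tr}(\Sigmab_{obs})/n}+\mathrm{tr}(\Sigmab_{obs})/n$ together with $\mathrm{tr}(\Sigmab_{obs})=\sigma^2+\mathrm{tr}(\Sigmab)=O(\nu+p)$ from the spectral bound \eqref{eqn:sigmaeig_proof}. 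The sign ambiguity of $\hat{\thetab}_1$ and the bookkeeping of the several deviation parameters into a single $\delta_2$ are routine.
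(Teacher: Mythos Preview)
Your approach is essentially the paper's: both use a Davis--Kahan perturbation argument together with the same three-term decomposition of $E=\hat{\Sigmab}_{obs}-\Sigmab_{obs}$ into the spike term $\sigma^2(\frac1n\sum_i Z_i^2-1)\thetab_1\thetab_1^T$, the cross term $\sigma(\wb\thetab_1^T+\thetab_1\wb^T)$ with $\wb=\frac1n\sum_i Z^{(i)}\Xb^{(i)}$, and the bulk error $\hat{\Sigmab}_X-\Sigmab$, and both control these with the same concentration inequalities (Bernstein for $\chi^2$, \citet[Lemma~8.8]{wainwright2019high} for $\Vert\wb\Vert_2$, Gaussian covariance concentration for the bulk).

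The one substantive difference is the form of Davis--Kahan. The paper invokes the \emph{refined} bound \citep[Theorem~8.5]{wainwright2019high}
\[
\Vert \thetab_1-\hat{\thetab}_1\Vert_2 \le \frac{2\Vert \tilde{\pb}\Vert_2}{\nu-2\Vert \Pb\Vert_2},\qquad \tilde{\pb}=\Ub_2^T \Pb\,\thetab_1,\quad \Ub=[\thetab_1,\Ub_2],
\]
whose numerator involves only $\Ub_2^T \Pb\,\thetab_1$. Because $\Ub_2^T\thetab_1=\boldsymbol{0}$, the spike term vanishes from $\tilde{\pb}$ entirely and enters only through the denominator requirement $\Vert \Pb\Vert_2\le\nu/4$, which holds once $|\frac1n\sum_i Z_i^2-1|\le 1/16$ (a constant-level event with probability $1-e^{-cn}$). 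This is why the paper can set $\delta_{11}=1/16$ while choosing $\delta_{12}=\delta_2/(4\sqrt{\nu})$, $\delta_{13}=\delta_2/16$, and why the final exponent is driven by the cross term alone. Your cruder bound $c\Vert E\Vert_2/\nu$ keeps the spike term in the numerator, so you must control the $\chi^2$ deviation at level $\sim\delta_2$ rather than at a constant; this still works, but it injects an extra probability contribution $e^{-cn\min\{\delta_2,\delta_2^2\}}$ that does not match the stated exponent when $\nu$ is large. The refined Davis--Kahan is precisely what buys the clean $\min\{\sqrt{\nu}\,\delta_2,\nu\delta_2^2\}$ form; otherwise the two arguments coincide.
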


\new{The lemma below shows a tighter bound with a large eigengap $\nu=\lambda_1({\Sigmab}_{obs})-\lambda_2({\Sigmab}_{obs})$ following \cite[Section 3.1]{fan2018eigenvector}.}

\new{\begin{lemma}
\label{lm3new}
Under the assumptions of Theorem \ref{thm1new}, 
\begin{align*}
    \newup{\Vert \thetab_1 - \hat{\thetab}_1 \Vert_\infty \asymp O_p(\frac{p \vee \lambda_1}{\nu p}\sqrt{\frac{\lnn p}{n}}) \asymp O_p(\frac{\sqrt{p}}{\lambda_1}\sqrt{\frac{\lnn p}{n}}),}
\end{align*}
with probability at least $1- C/p$ for some constant $C$.
\end{lemma}  }

\new{Before moving to the proofs of Theorem \ref{thm1} and \ref{thm1new}, we first show an upper bound for $\| \hat{\Sigmab} - \Sigmab \|_\infty$ using \eqref{pf-decomp}.
\begin{align}
    \Vert \hat{\Sigmab}-\Sigmab \Vert_{\infty} \le \Vert {\Sigmab}_{obs}-\hat{{\Sigmab}}_{obs} \Vert_{\infty}+\Vert \lambda_1 \thetab_1 \thetab_1^\tT - \hat{\lambda}_1 \hat{\thetab}_1 \hat{\thetab}_1^\tT \Vert_{\infty} + \Vert\lambda_k \thetab_1 \thetab_1^\tT \Vert_{\infty}. \label{pf-decomp02}
\end{align}
The term $\Vert \lambda_1 \thetab_1 \thetab_1^\tT - \hat{\lambda}_1 \hat{\thetab}_1 \hat{\thetab}_1^\tT \Vert_{\infty}$ can be expressed as:
\begin{align}
    \Vert \lambda_1 \thetab_1 \thetab_1^\tT - \hat{\lambda}_1 \hat{\thetab}_1 \hat{\thetab}_1^\tT \Vert_{\infty} = &\| \lambda_1 \thetab_1 \thetab_1^\tT - {\lambda}_1 {\thetab}_1 \hat{\thetab}_1^\tT + {\lambda}_1 {\thetab}_1 \hat{\thetab}_1^\tT - \hat{\lambda}_1 {\thetab}_1 \hat{\thetab}_1^\tT + \hat{\lambda}_1 {\thetab}_1 \hat{\thetab}_1^\tT - \hat{\lambda}_1 \hat{\thetab}_1 \hat{\thetab}_1^\tT \|_{\infty} \notag \\
    \le & \| \lambda_1 \thetab_1 \thetab_1^\tT - {\lambda}_1 {\thetab}_1 \hat{\thetab}_1^\tT \|_{\infty} + \| {\lambda}_1 {\thetab}_1 \hat{\thetab}_1^\tT - \hat{\lambda}_1 {\thetab}_1 \hat{\thetab}_1^\tT\|_{\infty} + \|\hat{\lambda}_1 {\thetab}_1 \hat{\thetab}_1^\tT - \hat{\lambda}_1 \hat{\thetab}_1 \hat{\thetab}_1^\tT \|_{\infty} \notag\\
    \le & |\lambda_1|\|\thetab_1\|_\infty\Vert \thetab_1 - \hat{\thetab}_1 \Vert_\infty + |\lambda_1 - \hat{\lambda}_1|\|\thetab_1\|_\infty\|\hat{\thetab}_1\|_\infty + |\hat{\lambda}_1|\|\hat{\thetab}_1\|_\infty\Vert \thetab_1 - \hat{\thetab}_1 \Vert_\infty. \label{pf-decomp03}
\end{align}
We can then use the bounds for $|\lambda_1 - \hat{\lambda}_1|$ and $\Vert \thetab_1 - \hat{\thetab}_1 \Vert_\infty$ in previous lemmas to bound $\Vert \lambda_1 \thetab_1 \thetab_1^T - \hat{\lambda}_1 \hat{\thetab}_1 \hat{\thetab}_1^\tT \Vert_{\infty}$.
}

\begin{proof}[\textbf{\upshape Proofs of Theorems \ref{thm1} and \ref{thm1new}:}]\label{pf-pfthm1}
\new{Now we are ready to prove Theorem \ref{thm1}. We first plug in the bounds in Lemmas \ref{lm1}, \ref{lm2} and \ref{lm3} to \eqref{pf-decomp03}. Since $\thetab_1$ is the eigenvector of a matrix, $\| \thetab_1 \|_\infty \le 1$. Combining these two results completes the proof.}

\new{To prove Theorem \ref{thm1new}, we need to plug in the bounds in Lemmas \ref{lm1}, \ref{lm2new} and \ref{lm3new} to \eqref{pf-decomp03}. \newup{Given Lemmas \ref{lm2new} and \ref{lm3new}, we can bound the second term in \eqref{pf-decomp03},}
\begin{align*}
    \newup{|\lambda_1 - \hat{\lambda}_1|\|\thetab_1\|_\infty\|\hat{\thetab}_1\|_\infty = O_p(\sqrt{\frac{\lambda_1}{np}}\vee\frac{1}{n}) = O_p(\sqrt{\frac{\lnn p}{n}}).}
\end{align*}
\newup{Similarly, the first and third terms in \eqref{pf-decomp03} have the convergence rate of}
\begin{align*}
    \newup{O_p(\lambda_1\Vert \thetab_1 \Vert_\infty (\frac{\sqrt{p}}{\lambda_1}\sqrt{\frac{\lnn p}{n}})) = O_p(\sqrt{\frac{\lnn p}{n}}).}
\end{align*}
\newup{Then with incoherent $\| \thetab_1 \|_\infty = O(1/\sqrt{p})$, the last term in \eqref{pf-decomp02} $\Vert\lambda_k \thetab_1 \thetab_1^\tT \Vert_{\infty}$ has the convergence rate of $O_p(1/p)$.}}
\end{proof}

\begin{proof}[\textbf{\upshape Proof of Theorem \ref{thm2}:}]\label{pf-thm2}
The proof follows the proof of \cite[Theorem 6]{cai2011constrained}. First we know,
\begin{align*}
    \Vert \hat{\Sigmab}\Omegab-\Ib \Vert_{\infty}=\Vert (\hat{\Sigmab}-\Sigmab)\Omegab \Vert_{\infty} \le \Vert \hat{\Sigmab}-\Sigmab \Vert_{\infty}\Vert \Omegab \Vert_{L_1}.
\end{align*}
Then we have,
\begin{align*}
    \Vert \hat{\Sigmab}(\Omegab-\hat{\Omegab}_1)\Vert_{\infty}&=\Vert \hat{\Sigmab}\Omegab - \Ib +\Ib- \hat{\Sigmab}\hat{\Omegab}_1\Vert_{\infty}
    \le \Vert  \hat{\Sigmab}\Omegab - \Ib\Vert_{\infty}+\Vert \Ib- \hat{\Sigmab}\hat{\Omegab}_1\Vert_{\infty}+\Vert \hat{\Sigmab} - \Sigmab \Vert_{\infty} \Vert \Omegab \Vert_{L_1} + \lambda_n.
\end{align*}
We know,
\begin{align*}
    \Vert \Omegab-\hat{\Omegab}_1\Vert_{\infty}=\Vert \Omegab \Sigmab(\Omegab-\hat{\Omegab}_1) \Vert_{\infty} \le \Vert \Sigmab(\Omegab-\hat{\Omegab}_1) \Vert_{\infty} \Vert \Omegab \Vert_{L_1}.
\end{align*}
To bound the terms above, we need,
\begin{align*}
    \Vert \Sigmab (\Omegab-\hat{\Omegab}_1)\Vert_{\infty} \le \Vert \hat{\Sigmab}(\Omegab-\hat{\Omegab}_1)\Vert_{\infty} + \Vert \hat{\Sigmab} - \Sigmab \Vert_{\infty}\Vert \Omegab \Vert_{L_1}.
\end{align*}
We know $\Vert \Omegab \Vert_{L_1} \le M_0$ from (\ref{eqn:class_clime}) and combining the relations above with the result of Theorem \ref{thm1} or Theorem \ref{thm1new}, with the choice of $\lambda_n$ specified in Theorem \ref{thm2}, we can obtain the bound for $\hat{\Omegab}_1$. The bound of the same order can be obtained for $\hat{\Omegab}$, the symmetric version of $\hat{\Omegab}_1$.
\end{proof}

\section{Generalization of Section \ref{sec:theory}} \label{app12}
The analysis in Section \ref{sec:theory} assumes that the low-rank confounder is independent of $\Xb$ and the eigenvector of the covariance of the low-rank confounding is one of the eigenvectors of $\Sigmab$, the covariance of $\Xb$. Those two assumptions can be extended to the more general setups. In equation (\ref{eqn:xplusl_proof}), when $\Xb$ and $\Zb$ are not independent, the convariance matrix for ${\Xb}_{obs}$ becomes
\begin{align*}
    {\Sigmab}_{obs}= \Sigmab + {\sigma}{Cov}(\Xb,\Zb)\vb^\tT + {\sigma}\vb {Cov}(\Xb,\Zb)^\tT + \sigma^2 \vb \vb^\tT,
\end{align*}
where $Cov(\Xb,\Zb)$ is a $p$-dimensional column vector. We can see that ${\sigma}Cov(\Xb,\Zb)\vb^\tT + {\sigma}\vb Cov(\Xb,\Zb)^\tT + \sigma^2 \vb \vb^\tT$ has rank at most 3, hence ${\Sigmab}_{obs}$ can still be expressed as the sum of $\Sigmab$ and a low-rank matrix. \new{Here, to ensure that the confounding can be identified in PCA-based approach, we assume that both $\sigma$ and $\sigma^2$ are large compared to the eigenvalues of $\Sigmab$.} Then, our analysis in Section \ref{sec:theory} can still be applied here, but the eigenvectors of the low-rank matrix are not necessarily the eigenvectors of $\Sigmab$.

\section{Eigenvalues of sparse graphs}
\label{app-eigen}
\begin{figure}[ht!]
    \centering
    \includegraphics[width=0.85\textwidth]{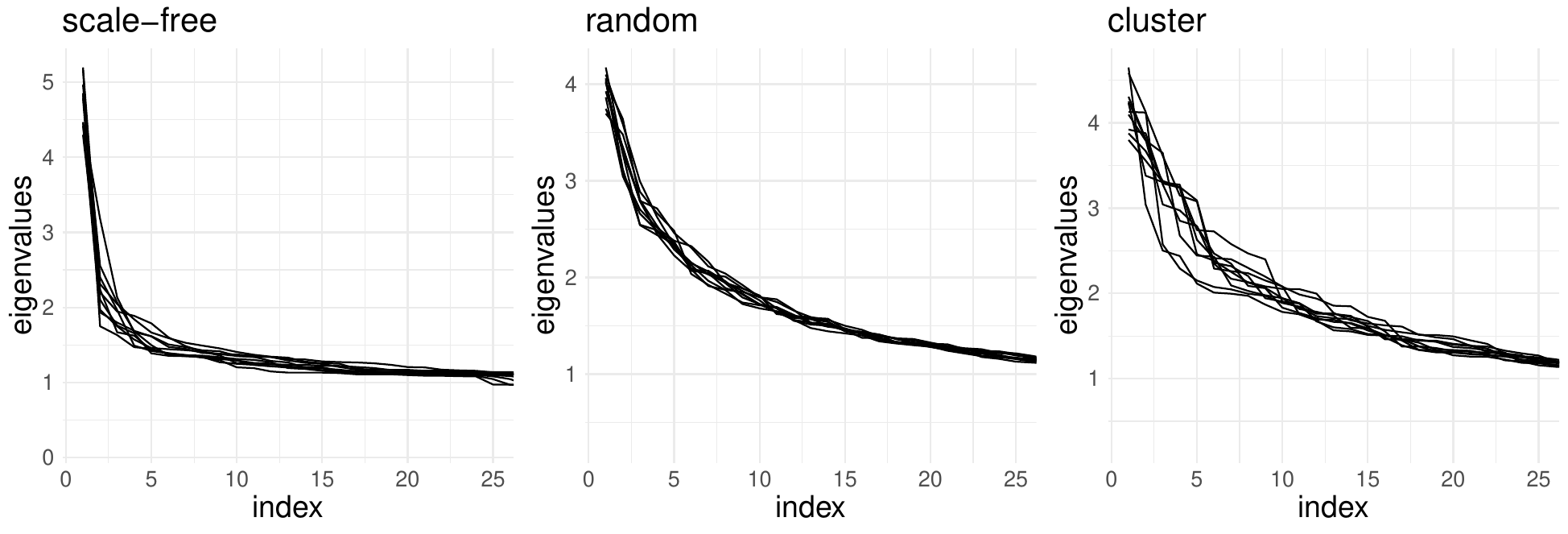}
    \caption{The distribution of the top 25 eigenvalues created by scale-free, random and cluster graphs using \texttt{huge} package with $p = 100$ and $n = 10000$.}
    \label{fig:eigenvalues01}
\end{figure}
\new{Fig.~\ref{fig:eigenvalues01} shows the first 25 eigenvalues of sparse graphs. We use \textsf{huge} package \cite{zhao2012huge} to generate the sparse graphs with three different structures: scale-free, random and clustered. We set $p = 100, n = 10000$ and assume default for all other parameters (see \cite{zhao2012huge} for more detail). We generate 10 realizations for each graph structure and show the distribution of the first 25 eigenvalues of $\Sigmab$ in Fig.~\ref{fig:eigenvalues01}. We notice that the top eigenvalues are typically larger than the rest, especially for the scale-free graphs.}

\section{Gene co-expression networks data}\label{app-gene}
Now we briefly introduce the data and the pre-processing procedure of gene co-expression networks in Section \ref{sec:gene}. More detail can be found from \cite{parsana2019addressing}. We use the RNA-Seq data from the Genotype-Tissue Expression (GTEx) project v6p release. 
We consider three diverse tissues with sample sizes between 300 to 400 each: blood, lung and tibial nerve. We first filter the non-overlapping protein genes and perform a log transformation with base $2$ to scale the data following \cite[Appendix 2.4]{parsana2019addressing}. Since the underlying true network structure is generally unknown, we obtain the interaction information from some canonical pathway databases including KEGG, Biocarta, Reactome and Pathway Interaction Database. To make better use of those information, we pick 1000 high-variance genes which are included in all these databases, thus $p=1000$ in this example.


%

\end{appendices}

\bibliography{references}

\begin{thebibliography}{54}
\providecommand{\natexlab}[1]{#1}
\providecommand{\url}[1]{\texttt{#1}}
\expandafter\ifx\csname urlstyle\endcsname\relax
  \providecommand{\doi}[1]{doi: #1}\else
  \providecommand{\doi}{doi: \begingroup \urlstyle{rm}\Url}\fi

\bibitem[Agarwal et~al.(2012)Agarwal, Negahban, Wainwright,
  et~al.]{agarwal2012noisy}
A.~Agarwal, S.~Negahban, M.~J. Wainwright, et~al.
\newblock Noisy matrix decomposition via convex relaxation: Optimal rates in
  high dimensions.
\newblock \emph{The Annals of Statistics}, 40\penalty0 (2):\penalty0
  1171--1197, 2012.

\bibitem[Albert and Barab{\'a}si(2002)]{albert2002statistical}
R.~Albert and A.-L. Barab{\'a}si.
\newblock Statistical mechanics of complex networks.
\newblock \emph{Reviews of modern physics}, 74\penalty0 (1):\penalty0 47, 2002.

\bibitem[Barch et~al.(2013)Barch, Burgess, Harms, Petersen, Schlaggar,
  Corbetta, Glasser, Curtiss, Dixit, Feldt, et~al.]{barch2013function}
D.~M. Barch, G.~C. Burgess, M.~P. Harms, S.~E. Petersen, B.~L. Schlaggar,
  M.~Corbetta, M.~F. Glasser, S.~Curtiss, S.~Dixit, C.~Feldt, et~al.
\newblock Function in the human connectome: task-fmri and individual
  differences in behavior.
\newblock \emph{Neuroimage}, 80:\penalty0 169--189, 2013.

\bibitem[Bartholomew et~al.(2011)Bartholomew, Knott, and
  Moustaki]{bartholomew2011latent}
D.~J. Bartholomew, M.~Knott, and I.~Moustaki.
\newblock \emph{Latent variable models and factor analysis: A unified
  approach}.
\newblock John Wiley \& Sons, 2011.

\bibitem[Bickel et~al.(2008{\natexlab{a}})Bickel, Levina,
  et~al.]{bickel2008covariance}
P.~J. Bickel, E.~Levina, et~al.
\newblock Covariance regularization by thresholding.
\newblock \emph{The Annals of Statistics}, 36\penalty0 (6):\penalty0
  2577--2604, 2008{\natexlab{a}}.

\bibitem[Bickel et~al.(2008{\natexlab{b}})Bickel, Levina,
  et~al.]{bickel2008regularized}
P.~J. Bickel, E.~Levina, et~al.
\newblock Regularized estimation of large covariance matrices.
\newblock \emph{The Annals of Statistics}, 36\penalty0 (1):\penalty0 199--227,
  2008{\natexlab{b}}.

\bibitem[Buja and Eyuboglu(1992)]{buja1992remarks}
A.~Buja and N.~Eyuboglu.
\newblock Remarks on parallel analysis.
\newblock \emph{Multivariate behavioral research}, 27\penalty0 (4):\penalty0
  509--540, 1992.

\bibitem[Cai et~al.(2011)Cai, Liu, and Luo]{cai2011constrained}
T.~Cai, W.~Liu, and X.~Luo.
\newblock A constrained l1 minimization approach to sparse precision matrix
  estimation.
\newblock \emph{Journal of the American Statistical Association}, 106\penalty0
  (494):\penalty0 594--607, 2011.

\bibitem[Cai et~al.(2016)Cai, Ren, Zhou, et~al.]{cai2016estimating}
T.~T. Cai, Z.~Ren, H.~H. Zhou, et~al.
\newblock Estimating structured high-dimensional covariance and precision
  matrices: Optimal rates and adaptive estimation.
\newblock \emph{Electronic Journal of Statistics}, 10\penalty0 (1):\penalty0
  1--59, 2016.

\bibitem[Cand{\`e}s and Recht(2009)]{candes2009exact}
E.~J. Cand{\`e}s and B.~Recht.
\newblock Exact matrix completion via convex optimization.
\newblock \emph{Foundations of Computational mathematics}, 9\penalty0
  (6):\penalty0 717, 2009.

\bibitem[Chandrasekaran et~al.(2011)Chandrasekaran, Sanghavi, Parrilo, and
  Willsky]{chandrasekaran2011rank}
V.~Chandrasekaran, S.~Sanghavi, P.~A. Parrilo, and A.~S. Willsky.
\newblock Rank-sparsity incoherence for matrix decomposition.
\newblock \emph{SIAM Journal on Optimization}, 21\penalty0 (2):\penalty0
  572--596, 2011.

\bibitem[Chandrasekaran et~al.(2012)Chandrasekaran, Parrilo, and
  Willsky]{chandrasekaran2012latent}
V.~Chandrasekaran, P.~A. Parrilo, and A.~S. Willsky.
\newblock Latent variable graphical model selection via convex optimization.
\newblock \emph{The Annals of Statistics}, pages 1935--1967, 2012.

\bibitem[Choi et~al.(2011)Choi, Tan, Anandkumar, and Willsky]{choi2011learning}
M.~J. Choi, V.~Y. Tan, A.~Anandkumar, and A.~S. Willsky.
\newblock Learning latent tree graphical models.
\newblock \emph{Journal of Machine Learning Research}, 12\penalty0
  (May):\penalty0 1771--1812, 2011.

\bibitem[Danaher et~al.(2014)Danaher, Wang, and Witten]{danaher2014joint}
P.~Danaher, P.~Wang, and D.~M. Witten.
\newblock The joint graphical lasso for inverse covariance estimation across
  multiple classes.
\newblock \emph{Journal of the Royal Statistical Society: Series B (Statistical
  Methodology)}, 76\penalty0 (2):\penalty0 373--397, 2014.

\bibitem[Fama and French(2004)]{fama2004capital}
E.~F. Fama and K.~R. French.
\newblock The capital asset pricing model: Theory and evidence.
\newblock \emph{Journal of economic perspectives}, 18\penalty0 (3):\penalty0
  25--46, 2004.

\bibitem[Fan and Li(2001)]{fan2001variable}
J.~Fan and R.~Li.
\newblock Variable selection via nonconcave penalized likelihood and its oracle
  properties.
\newblock \emph{Journal of the American statistical Association}, 96\penalty0
  (456):\penalty0 1348--1360, 2001.

\bibitem[Fan et~al.(2018)Fan, Wang, and Zhong]{fan2018eigenvector}
J.~Fan, W.~Wang, and Y.~Zhong.
\newblock An $l_{\infty}$ eigenvector perturbation bound and its application to
  robust covariance estimation.
\newblock \emph{Journal of Machine Learning Research}, 18\penalty0
  (207):\penalty0 1--42, 2018.

\bibitem[Fox and Raichle(2007)]{fox2007spontaneous}
M.~D. Fox and M.~E. Raichle.
\newblock Spontaneous fluctuations in brain activity observed with functional
  magnetic resonance imaging.
\newblock \emph{Nature reviews neuroscience}, 8\penalty0 (9):\penalty0
  700--711, 2007.

\bibitem[Freytag et~al.(2015)Freytag, Gagnon-Bartsch, Speed, and
  Bahlo]{freytag2015systematic}
S.~Freytag, J.~Gagnon-Bartsch, T.~P. Speed, and M.~Bahlo.
\newblock Systematic noise degrades gene co-expression signals but can be
  corrected.
\newblock \emph{BMC bioinformatics}, 16\penalty0 (1):\penalty0 309, 2015.

\bibitem[Friedman et~al.(2008)Friedman, Hastie, and
  Tibshirani]{friedman2008sparse}
J.~Friedman, T.~Hastie, and R.~Tibshirani.
\newblock Sparse inverse covariance estimation with the graphical lasso.
\newblock \emph{Biostatistics}, 9\penalty0 (3):\penalty0 432--441, 2008.

\bibitem[Gagnon-Bartsch et~al.(2013)Gagnon-Bartsch, Jacob, and
  Speed]{gagnon2013removing}
J.~A. Gagnon-Bartsch, L.~Jacob, and T.~P. Speed.
\newblock Removing unwanted variation from high dimensional data with negative
  controls.
\newblock \emph{Berkeley: Tech Reports from Dep Stat Univ California}, pages
  1--112, 2013.

\bibitem[Geng et~al.(2018)Geng, Kolar, and Koyejo]{geng2018joint}
S.~Geng, M.~Kolar, and O.~Koyejo.
\newblock Joint nonparametric precision matrix estimation with confounding.
\newblock \emph{arXiv preprint arXiv:1810.07147}, 2018.

\bibitem[Hastie et~al.(2009)Hastie, Tibshirani, and
  Friedman]{hastie2009elements}
T.~Hastie, R.~Tibshirani, and J.~Friedman.
\newblock \emph{The elements of statistical learning: data mining, inference,
  and prediction}.
\newblock Springer Science \& Business Media, 2009.

\bibitem[Hayden et~al.(2015)Hayden, Chachra, Alemi, Ginsparg, and
  Sethna]{hayden2015canonical}
L.~X. Hayden, R.~Chachra, A.~A. Alemi, P.~H. Ginsparg, and J.~P. Sethna.
\newblock Canonical sectors and evolution of firms in the us stock markets.
\newblock \emph{arXiv preprint arXiv:1503.06205}, 2015.

\bibitem[Horn(1965)]{horn1965rationale}
J.~L. Horn.
\newblock A rationale and test for the number of factors in factor analysis.
\newblock \emph{Psychometrika}, 30\penalty0 (2):\penalty0 179--185, 1965.

\bibitem[Horn and Johnson(2012)]{horn2012matrix}
R.~A. Horn and C.~R. Johnson.
\newblock \emph{Matrix analysis}.
\newblock Cambridge university press, 2012.

\bibitem[Jacob et~al.(2016)Jacob, Gagnon-Bartsch, and
  Speed]{jacob2016correcting}
L.~Jacob, J.~A. Gagnon-Bartsch, and T.~P. Speed.
\newblock Correcting gene expression data when neither the unwanted variation
  nor the factor of interest are observed.
\newblock \emph{Biostatistics}, 17\penalty0 (1):\penalty0 16--28, 2016.

\bibitem[Jolliffe(2005)]{jolliffe2005principal}
I.~Jolliffe.
\newblock Principal component analysis.
\newblock \emph{Encyclopedia of statistics in behavioral science}, 2005.

\bibitem[Khare et~al.(2015)Khare, Oh, and Rajaratnam]{khare2015convex}
K.~Khare, S.-Y. Oh, and B.~Rajaratnam.
\newblock A convex pseudolikelihood framework for high dimensional partial
  correlation estimation with convergence guarantees.
\newblock \emph{Journal of the Royal Statistical Society: Series B (Statistical
  Methodology)}, 77\penalty0 (4):\penalty0 803--825, 2015.

\bibitem[Koltchinskii and Lounici(2017)]{koltchinskii2017concentration}
V.~Koltchinskii and K.~Lounici.
\newblock Concentration inequalities and moment bounds for sample covariance
  operators.
\newblock \emph{Bernoulli}, 23\penalty0 (1):\penalty0 110--133, 2017.

\bibitem[Lam and Fan(2009)]{lam2009sparsistency}
C.~Lam and J.~Fan.
\newblock Sparsistency and rates of convergence in large covariance matrix
  estimation.
\newblock \emph{Annals of statistics}, 37\penalty0 (6B):\penalty0 4254, 2009.

\bibitem[Lauritzen(1996)]{lauritzen1996graphical}
S.~L. Lauritzen.
\newblock \emph{Graphical models}, volume~17.
\newblock Clarendon Press, 1996.

\bibitem[Leek and Storey(2007)]{leek2007capturing}
J.~T. Leek and J.~D. Storey.
\newblock Capturing heterogeneity in gene expression studies by surrogate
  variable analysis.
\newblock \emph{PLoS genetics}, 3\penalty0 (9), 2007.

\bibitem[Leek et~al.(2012)Leek, Johnson, Parker, Jaffe, and
  Storey]{leek2012sva}
J.~T. Leek, W.~E. Johnson, H.~S. Parker, A.~E. Jaffe, and J.~D. Storey.
\newblock The sva package for removing batch effects and other unwanted
  variation in high-throughput experiments.
\newblock \emph{Bioinformatics}, 28\penalty0 (6):\penalty0 882--883, 2012.

\bibitem[Lim and Jahng(2019)]{lim2019determining}
S.~Lim and S.~Jahng.
\newblock Determining the number of factors using parallel analysis and its
  recent variants.
\newblock \emph{Psychological methods}, 24\penalty0 (4):\penalty0 452, 2019.

\bibitem[Liu et~al.(2012)Liu, Han, and Zhang]{liu2012transelliptical}
H.~Liu, F.~Han, and C.-h. Zhang.
\newblock Transelliptical graphical models.
\newblock In \emph{Advances in neural information processing systems}, pages
  800--808, 2012.

\bibitem[Meinshausen and B{\"u}hlmann(2010)]{meinshausen2010stability}
N.~Meinshausen and P.~B{\"u}hlmann.
\newblock Stability selection.
\newblock \emph{Journal of the Royal Statistical Society: Series B (Statistical
  Methodology)}, 72\penalty0 (4):\penalty0 417--473, 2010.

\bibitem[Meinshausen et~al.(2006)Meinshausen, B{\"u}hlmann,
  et~al.]{meinshausen2006high}
N.~Meinshausen, P.~B{\"u}hlmann, et~al.
\newblock High-dimensional graphs and variable selection with the lasso.
\newblock \emph{The annals of statistics}, 34\penalty0 (3):\penalty0
  1436--1462, 2006.

\bibitem[Meng et~al.(2014)Meng, Eriksson, and Hero]{meng2014learning}
Z.~Meng, B.~Eriksson, and A.~Hero.
\newblock Learning latent variable gaussian graphical models.
\newblock In \emph{International Conference on Machine Learning}, pages
  1269--1277, 2014.

\bibitem[Nussbaum and Giesen(2019)]{nussbaum2019ising}
F.~Nussbaum and J.~Giesen.
\newblock Ising models with latent conditional gaussian variables.
\newblock \emph{arXiv preprint arXiv:1901.09712}, 2019.

\bibitem[Parsana et~al.(2019)Parsana, Ruberman, Jaffe, Schatz, Battle, and
  Leek]{parsana2019addressing}
P.~Parsana, C.~Ruberman, A.~E. Jaffe, M.~C. Schatz, A.~Battle, and J.~T. Leek.
\newblock Addressing confounding artifacts in reconstruction of gene
  co-expression networks.
\newblock \emph{Genome biology}, 20\penalty0 (1):\penalty0 94, 2019.

\bibitem[Peng et~al.(2009)Peng, Wang, Zhou, and Zhu]{peng2009partial}
J.~Peng, P.~Wang, N.~Zhou, and J.~Zhu.
\newblock Partial correlation estimation by joint sparse regression models.
\newblock \emph{Journal of the American Statistical Association}, 104\penalty0
  (486):\penalty0 735--746, 2009.

\bibitem[Price et~al.(2014)Price, Wee, Gao, and Shen]{price2014multiple}
T.~Price, C.-Y. Wee, W.~Gao, and D.~Shen.
\newblock Multiple-network classification of childhood autism using functional
  connectivity dynamics.
\newblock In \emph{International Conference on Medical Image Computing and
  Computer-Assisted Intervention}, pages 177--184. Springer, 2014.

\bibitem[Ravikumar et~al.(2010)Ravikumar, Wainwright, Lafferty,
  et~al.]{ravikumar2010high}
P.~Ravikumar, M.~J. Wainwright, J.~D. Lafferty, et~al.
\newblock High-dimensional ising model selection using $\ell_1$-regularized
  logistic regression.
\newblock \emph{The Annals of Statistics}, 38\penalty0 (3):\penalty0
  1287--1319, 2010.

\bibitem[Ravikumar et~al.(2011)Ravikumar, Wainwright, Raskutti, Yu,
  et~al.]{ravikumar2011high}
P.~Ravikumar, M.~J. Wainwright, G.~Raskutti, B.~Yu, et~al.
\newblock High-dimensional covariance estimation by minimizing
  $\ell_1$-penalized log-determinant divergence.
\newblock \emph{Electronic Journal of Statistics}, 5:\penalty0 935--980, 2011.

\bibitem[Ren and Zhou(2012)]{ren2012discussion}
Z.~Ren and H.~H. Zhou.
\newblock Discussion: Latent variable graphical model selection via convex
  optimization1.
\newblock \emph{The Annals of Statistics}, 40\penalty0 (4):\penalty0
  1989--1996, 2012.

\bibitem[Rothman et~al.(2008)Rothman, Bickel, Levina, Zhu,
  et~al.]{rothman2008sparse}
A.~J. Rothman, P.~J. Bickel, E.~Levina, J.~Zhu, et~al.
\newblock Sparse permutation invariant covariance estimation.
\newblock \emph{Electronic Journal of Statistics}, 2:\penalty0 494--515, 2008.

\bibitem[Stegle et~al.(2011)Stegle, Lippert, Mooij, Lawrence, and
  Borgwardt]{stegle2011efficient}
O.~Stegle, C.~Lippert, J.~M. Mooij, N.~D. Lawrence, and K.~Borgwardt.
\newblock Efficient inference in matrix-variate gaussian models with iid
  observation noise.
\newblock In \emph{Advances in neural information processing systems}, pages
  630--638, 2011.

\bibitem[Tibshirani(1996)]{tibshirani1996regression}
R.~Tibshirani.
\newblock Regression shrinkage and selection via the lasso.
\newblock \emph{Journal of the Royal Statistical Society: Series B
  (Methodological)}, 58\penalty0 (1):\penalty0 267--288, 1996.

\bibitem[Van~Dijk et~al.(2012)Van~Dijk, Sabuncu, and Buckner]{van2012influence}
K.~R. Van~Dijk, M.~R. Sabuncu, and R.~L. Buckner.
\newblock The influence of head motion on intrinsic functional connectivity
  mri.
\newblock \emph{Neuroimage}, 59\penalty0 (1):\penalty0 431--438, 2012.

\bibitem[Wainwright(2019)]{wainwright2019high}
M.~J. Wainwright.
\newblock \emph{High-dimensional statistics: A non-asymptotic viewpoint},
  volume~48.
\newblock Cambridge University Press, 2019.

\bibitem[Yuan and Lin(2007)]{yuan2007model}
M.~Yuan and Y.~Lin.
\newblock Model selection and estimation in the gaussian graphical model.
\newblock \emph{Biometrika}, 94\penalty0 (1):\penalty0 19--35, 2007.

\bibitem[Zhang(2010)]{zhang2010nearly}
C.-H. Zhang.
\newblock Nearly unbiased variable selection under minimax concave penalty.
\newblock \emph{The Annals of statistics}, 38\penalty0 (2):\penalty0 894--942,
  2010.

\bibitem[Zhao et~al.(2012)Zhao, Liu, Roeder, Lafferty, and
  Wasserman]{zhao2012huge}
T.~Zhao, H.~Liu, K.~Roeder, J.~Lafferty, and L.~Wasserman.
\newblock The huge package for high-dimensional undirected graph estimation in
  r.
\newblock \emph{The Journal of Machine Learning Research}, 13:\penalty0
  1059--1062, 2012.

\end{thebibliography}

\end{document}